\newcommand{\hide}[1]{}
\newcommand{\RR}{\mathbb{R}}
\newcommand{\NN}{\mathbb{N}}
\newtheorem{definition}{Definition}
\newtheorem{theorem}{Theorem}
\newtheorem{lemma}{Lemma}
\newenvironment{proof}{\paragraph{Proof:}}{\hfill$\square$}
\DeclareMathOperator*{\EE}{\mathbb{E}}
\begin{document}
	
	
	\title{Social Welfare in One-Sided Matching Mechanisms} 
%

\author{George Christodoulou\footnote{
		University of Liverpool, United Kingdom E-mail: {\tt gchristo@liv.ac.uk}}
	\and
	Aris Filos-Ratsikas\footnote{
		University of Oxford, United Kingdom. E-mail: {\tt aris.filos-ratsikas@cs.ox.ac.uk}}\\
	\and
	S\o ren Kristoffer Stiil Frederiksen\footnote{
		Aarhus University, Denmark. E-mail: {\tt sorensf@gmail.com}}
	\and
	Paul W. Goldberg \footnote{
		University of Oxford, United Kingdom. E-mail: {\tt paul.goldberg@cs.ox.ac.uk}}\\
	\and
	Jie Zhang \footnote{
		University of Oxford, United Kingdom. E-mail: {\tt jie.zhang@cs.ox.ac.uk}}\\
	\and
	Jinshan Zhang\footnote{
		University of Liverpool, United Kingdom E-mail: {\tt jinshan.zhang@liv.ac.uk}}
}
\date{} 
\maketitle
%

\begin{abstract}
We study the Price of Anarchy of mechanisms for the well-known problem
of one-sided matching, or house allocation, with respect to the social welfare objective.
We consider both ordinal mechanisms, where agents submit preference lists
over the items, and cardinal mechanisms, where agents may submit numerical
values for the items being allocated.
We present a general lower bound of $\Omega(\sqrt{n})$ on the Price of Anarchy,
which applies to \emph{all} mechanisms. We show that
two well-known mechanisms, Probabilistic Serial, and Random Priority,
achieve a matching upper bound.
We extend our lower bound to the Price of Stability of a large class of
mechanisms that satisfy a common proportionality property, and
show stronger bounds on the Price of Anarchy of all {\em deterministic} mechanisms.
\end{abstract}




\section{Introduction}

\emph{One-sided matching} (also called the house allocation problem)
is the fundamental problem of assigning
items to agents, such that each agent receives exactly one item.
It has numerous applications, such as assigning workers to
shifts, students to courses or patients to doctor appointments.
In this setting, agents are often asked to provide {\em ordinal
preferences}, i.e. preference lists, or rankings
of the items.
We assume that underlying these ordinal preferences,
agents have numerical values specifying how much they
value each item \cite{HZ:79}.
In game-theoretic terms, these are the agents' von Neumann-Morgenstern
utility functions~\cite{von2007theory} and the associated preferences
are often referred to as \emph{cardinal preferences}.

A \emph{mechanism} is a function that maps agents' valuations to
matchings. However, agents are rational strategic entities that might not always
report their valuations truthfully; they may misreport their values if
that results in a better matching (from their own perspective).
Assuming the agents report their valuations strategically to maximize
their utilities, it is of interest to study the
\emph{Nash equilibria} of the induced game, i.e. strategy profiles
from which no agent wishes to unilaterally deviate.
 
A natural objective for the designer is to choose the matching that
maximizes the \emph{social welfare}, i.e. the sum of agents'
valuations for the items they are matched with, which is the most prominent
measure of aggregate utility in the literature. Given the strategic nature
of the agents, we are interested in mechanisms that maximize the
social welfare \emph{in the equilibrium}. We use the standard measure
of equilibrium inefficiency, the \emph{Price of Anarchy}
\cite{koutsoupias1999worst}, that compares the maximum social welfare
attainable in any matching with the {\em worst-case} social welfare that can be achieved in
an equilibrium.



We evaluate the efficiency of a mechanism with respect to the Price of
Anarchy of the induced game.
We study both deterministic and randomized mechanisms: in the latter case 
the output is a probability mixture over matchings, instead of a
single matching. We are interested in the class of \emph{cardinal}
mechanisms, which use cardinal preferences, and generalize the ordinal
mechanisms.

Note that our setting involves no monetary
transfers and generally falls under the umbrella of
\emph{approximate mechanism design without money} \cite{PT:09}.
In general settings without money, one has to
fix a canonical representation of the valuations. A common
approach in the literature is to consider the \emph{unit-sum}
normalization, i.e. each agent has a total value of $1$ for all the items.
We obtain results for unit-sum valuations, and extend most of these
to another common normalization, \emph{unit-range}.

\subsection{Our results}

In Section~\ref{sec:PS} we bound the inefficiency of
the two best-known mechanisms in the
matching literature, \emph{Probabilistic Serial} and \emph{Random Priority}.
In particular, for $n$ agents and $n$ items, the Price of Anarchy is $O(\sqrt{n})$.
In Section~\ref{sec:Lower} we complement this with a \emph{matching} lower
bound (i.e. $\Omega(\sqrt{n})$) that applies to {\em all} cardinal (randomized) mechanisms.
As a result, we conclude that these two {\em ordinal} mechanisms
(ones that compute matchings that only depend on preference orderings)
are optimal. These results suggest that it does not help a welfare maximizer
to ask agents to report more than the ordinal preferences.

We separately consider \emph{deterministic} mechanisms
and in Section~\ref{sec:Lower} prove that their Price of Anarchy is $\Omega(n^2)$,
even for cardinal mechanisms. This shows that randomization is necessary
for non-trivial worst-case efficiency guarantees.

In Section \ref{app:solconcepts}, we extend our results to more general
solutions concepts as well as the case of incomplete information. 
Finally, in Section \ref{sec:POS}, we prove that under a mild ``proportionality''
property, our lower bound of $\Omega(\sqrt{n})$ extends to the \emph{Price of Stability},
a more optimistic measure of efficiency \cite{anshelevich2008price},
which strengthens the negative results even further.
Additionally, we discuss how our results extend to the other common normalization
in the literature, \emph{unit-range} \cite{qiang,FFZ:14,ZHOU:90}.



\subsection{Discussion and related work}

The one-sided matching problem was introduced in \cite{HZ:79} and has been
studied extensively ever since (see \cite{AS:13} for a recent overview).
Over the years, several different mechanisms have been proposed with various desirable
properties related to truthfulness, fairness and economic
efficiency with Probabilistic Serial \cite{BM:01,aziz2014strategic,BCK:11,AS:13}
and Random Priority \cite{AS:13,BM:01,aziz2013computational,krysta2014size,FFZ:14,qiang}
being the two prominent examples.

As mentioned earlier, in settings without money,
one needs to represent the valuations in some canonical way. 
A common approach is the \emph{unit-sum}
normalization, i.e. each agent has a total value of $1$ for all the
items. Intuitively, this normalization means that each agent has equal
influence within the mechanism and her values can be
interpreted as ``scrip money" that she uses to acquire items.
The unit-sum representation is standard for social welfare
maximization in many settings without money including fair division,
cake cutting and resource allocation
\cite{brams2012maxsum,caragiannis2012efficiency,GC:10,FFZ:14} among others.
Moreover, without any normalization, non-trivial Price of Anarchy bounds
cannot be achieved by any mechanism.

The objective of social welfare maximization for one-sided matching problems
has been studied before in the literature, but mainly for truthful mechanisms
 \cite{qiang,FFZ:14}. Our lower bounds are more general, since they apply to \emph{all} mechanisms, not just truthful ones. In particular, our lower bound on the Price of Anarchy of all mechanisms generalizes the corresponding bound for truthful mechanisms in \cite{FFZ:14}. Note that Random Priority is
truthful (truth-telling is a dominant strategy equilibrium) but it
has other equilibria as well; we observe that the welfare guarantees of the mechanism hold for
all equilibria, not just the truthtelling
ones. Similar approaches have been made for truthful mechanisms like
the second price auction in settings with money.

While given our general lower bound, proving a matching upper bound for Random Priority is enough to establish tightness, it is still important to know what the welfare guarantees of Probabilistic Serial are, given that it is arguably the most popular one-sided matching mechanism. The mechanism was introduced by \cite{BM:01} and since then, it has been in the center of attention of the matching literature, with related work on characterizations \cite{hashimoto2014two,kesten2006probabilistic}, extensions \cite{katta2006solution}, strategic aspects \cite{kojima2010incentives} and hardness of manipulation \cite{aziz2015manipulating}. Somewhat surprisingly, the Nash equilibria of the mechanism were only recently studied. Aziz et al. \cite{aziz2015equilibria} prove that the mechanism has pure Nash equilibria while Ekici and Kesten \cite{ekici2010ordinal} study the \emph{ordinal} equilibria of the mechanism and prove that the desirable properties of the mechanism are not necessarily satisfied for those profiles.

Another, somewhat different recent branch of study considers ordinal
measures of efficiency instead of social welfare maximization, under the assumption that agents'
preferences are only expressed through preference orderings over items.
Bhalgat et al.~\cite{BCK:11} study the approximation ratio of matching mechanisms,
when the objective is maximization of \emph{ordinal social welfare},
a notion of efficiency that they define based solely on ordinal information.
Other measures of efficiency for one-sided matchings were also studied in Krysta et al.~\cite{krysta2014size}, where the authors design truthful mechanisms to approximate the size of a maximum cardinally (or maximum agent weight) Pareto-optimal matching and in Chakrabarty and Swamy \cite{CS14} where the authors consider the rank approximation as the measure of efficiency. While interesting, these measures of efficiency
do not accurately encapsulate the socially desired outcome the way
that social welfare does, especially since an underlying cardinal
utility structure is part of the setting \cite{BM:01,HZ:79,von2007theory,ZHOU:90}.
Our results actually suggest that in order to achieve the optimal welfare
guarantees, one does not even need to elicit this utility structure;
agents can only be asked to report preference orderings, which is
arguably more appealing.

Finally, we point out that our work is in a sense analogous to the literature that studies the Price of Anarchy in item-bidding auctions (e.g. see \cite{CKS08,ST13} and references therein) for settings without money. Furthermore, the extension of our results to very general solution concepts (coarse correlated equilibria) and settings of incomplete information (Bayes-Nash equilibria) is somehow reminiscent of the \emph{smoothness} framework \cite{roughgarden2009intrinsic} for games. While our results are not proven using the smoothness condition, our extension technique is similar in spirit.



\section{Preliminaries} \label{sec:prelim}

Let $N=\{1,\ldots,n\}$ be a finite set of agents and $A=\{1,\ldots,n\}$ be a finite set of indivisible items. An \emph{allocation} is a matching of agents to items, that is, an assignment of items to agents where each agent gets assigned exactly one item. We can view an allocation $\mu$ as a permutation vector $(\mu_1,\mu_2\ldots, \mu_n)$ where $\mu_i$ is the unique item matched with agent $i$. Let $O$ be the set of all allocations. Each agent $i$ has a valuation function $u_i:A \rightarrow \RR$ mapping items to real numbers. Valuation functions are considered to be well-defined modulo positive affine transformations, that is, for item $j: j\rightarrow\alpha u_i(j)+\beta$ is considered to be an alternative representation of the same valuation function $u_i$. Given this, we fix the canonical representation of $u_i$ to be \emph{unit-sum}, that is $\sum_j u_i(j)=1$, with $u_i(j)\geq 0$ for all $i$, $j$. Equivalently, we can consider valuation functions as \emph{valuation vectors} $u_i =(u_{i1},u_{i2},\ldots,u_{in})$ and let $V$ be the set of all valuation vectors of an agent. Let $\mathbf{u}=(u_1,u_2,\ldots, u_n)$ denote a typical \emph{valuation profile} and let $V^n$ be the set of all valuation profiles with $n$ agents.

We consider \emph{strategic agents} who might have incentives to misreport their valuations.
We define $\mathbf{s}=(s_1,s_2,\ldots, s_n)$ to be a pure strategy profile, where $s_i$ is the \emph{reported} valuation vector of agent $i$. We will use $\mathbf{s_{-i}}$ to denote the strategy profile without the $i$th coordinate and hence $\mathbf{s}=(s_i,\mathbf{s_{-i}})$ is an alternative way to denote a strategy profile. A \emph{direct revelation mechanism} without money is a function $M: V^n\rightarrow  O$ mapping \emph{reported} valuation profiles to matchings. For a randomized mechanism, we define $M$ to be a random map $M:V^n\rightarrow O$. Let $M_i(\mathbf{s})$ denote the restriction of the outcome of the mechanism to the $i$'th coordinate, which is the item assigned to agent $i$ by the mechanism. For randomized mechanisms, we let $p^{M,\mathbf{s}}_{ij}=\Pr[M_i(\mathbf{s})=j]$ and $p_i^{M,\mathbf{s}}=(p^{M,\mathbf{s}}_{i1},\ldots,p^{M,\mathbf{s}}_{in})$. When it is clear from the context, we drop one or both of the superscripts from the terms $p^{M,\mathbf{s}}_{ij}$. The utility of an agent from the outcome of a deterministic mechanism $M$ on input strategy profile $\mathbf{s}$ is simply $u_i(M_i(\mathbf{s}))$. For randomized mechanisms, an agent's utility is $\EE[u_i(M_i(\mathbf{s}))] = \sum_{j=1}^n p_{ij}^{M,\mathbf{s}} u_{ij}$. 

A subclass of mechanisms that are of particular interest is that of \emph{ordinal mechanisms}. Informally, ordinal mechanisms operate solely based on the \emph{ordering} of items induced by the valuation functions and not the actual numerical values themselves, while cardinal mechanisms take those numerical values into account. Formally, a mechanism $M$ is \emph{ordinal} if for any strategy profiles $\mathbf{s},\mathbf{s'}$ such that for all agents $i$ and for all items $j,\ell$, $s_{ij}<s_{i\ell} \Leftrightarrow s'_{ij}<s'_{i\ell}$, it holds that $M(\mathbf{s})=M(\mathbf{s'})$. A mechanism for which the above does not necessarily hold is \emph{cardinal}. Equivalently, the strategy space of ordinal mechanisms is the set of all permutations of $n$ items instead of the space of valuation functions $V^n$. A strategy $s_i$ of agent $i$ is a \emph{preference ordering} of items $(a_1,a_2,\ldots,a_n)$ where $a_\ell \succ a_k$ for $\ell < k$. We will write $j \succ_i j'$ to denote that agent $i$ prefers item $j$ to item $j'$ according to her true valuation function and $j \succ_{s_i} j'$ to denote that she prefers item $j$ to item $j'$ according to her strategy $s_i$. When it is clear from the context, we abuse the notation slightly and let $u_i$ denote the truthtelling strategy of agent $i$, even when the mechanism is ordinal. Note that agents can be indifferent between items and hence the preference order can be a weak ordering.

Two properties of interest are \emph{anonymity} and \emph{neutrality}. A mechanism is anonymous if the output is invariant under renamings of the agents and neutral if the output is invariant under relabeling of the objects. 

An \emph{equilibrium} is a strategy profile in which no agent has an incentive to deviate to a different strategy. First, we will focus on the concept of \emph{pure Nash equilibrium}, formally 
\begin{definition}
A strategy profile $\mathbf{s}$ is a \emph{pure Nash equilibrium} if $u_i(M_i(\mathbf{s})) \geq u_i(M_i(s'_i,s_{-i}))$ for all agents $i$, and pure deviating strategies $s'_i$.
\end{definition}
In Section \ref{app:solconcepts}, we extend our results to more general equilibrium notions as well as the setting of incomplete information, where agents' values are drawn from known distributions. 
Let $S_{\mathbf{u}}^M$ denote the set of all pure Nash equilibria of mechanism $M$ under truthful valuation profile $\mathbf{u}$. The measure of efficiency that we will use is the \emph{pure Price of Anarchy},
\begin{equation*}
PoA(M) = \sup_{\mathbf{u} \in V^n} \frac{SW_{OPT}(\mathbf{u})}{\min_{\mathbf s \in S_\mathbf{u}^M}SW_M(\mathbf{u,s})}
\end{equation*}
\noindent where $SW_M(\mathbf{u,s})=\sum_{i=1}^{n}\EE[u_i(M_i(\mathbf{s}))]$ is  the expected \emph{social welfare} of mechanism $M$ on strategy profile $\mathbf{s}$ under true valuation profile $\mathbf{u}$, and $SW_{OPT}(\mathbf{u})=\max_{\mu \in O}\sum_{i=1}^{n}u_i(\mu_i)$ is the social welfare of the optimal matching. Let $OPT(\mathbf{u})$ be the optimal matching on profile $\mathbf{u}$ and let $OPT_i(\mathbf{u})$ be the restriction to the $i$th coordinate. 


\section{Price of Anarchy guarantees}\label{sec:PS}

In this section, we prove the (pure) Price of Anarchy guarantees of Probabilistic Serial and Random Priority. Together with our lower bound in the next section, the results establish that both mechanisms are optimal among all mechanisms for the problem. 

First we consider Random Priority, often referred to as Random Serial Dictatorship. The mechanism first fixes an ordering of the agents uniformly at random and then according to that ordering, it sequentially matches them with their most preferred item that is still available. Filos-Ratsikas et al.~\cite{FFZ:14} proved that the welfare in any truthtelling equilibrium is an $\Omega(1/\sqrt{n})$-fraction of the maximum social welfare. While Random Priority has other equilibria as well, to establish the Price of Anarchy bound, it suffices to observe that at least for distinct valuations, any strategy other than truthtelling does not affect the allocation and hence it does not affect the social welfare. Intuitively, since agents pick their most preferred items, any equilibrium strategy would place the most preferred available items on top of the preference list, while the ordering of the items that are not picked does not affect the allocation of other agents. For valuations that are not distinct, the argument can be adapted using small perturbations of the values, losing only a small fraction of welfare. We state the theorem; the details are left for the full version.  

\begin{theorem}
The Price of Anarchy of Random Priority is $O(\sqrt{n})$.
\end{theorem}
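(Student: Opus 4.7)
The plan is to combine the $\Omega(1/\sqrt{n})$ welfare guarantee of Filos-Ratsikas et al.~\cite{FFZ:14} for the truthful profile with a structural claim: on any profile with pairwise distinct per-agent values, every pure Nash equilibrium of Random Priority produces exactly the truthful allocation for every realization of the random priority order. Once this is in hand, the theorem follows immediately in the distinct-valued case, and a short perturbation argument lifts it to arbitrary profiles.

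For the structural claim, fix a profile $\mathbf{u}$ whose per-agent valuations are pairwise distinct, and let $\mathbf{s}$ be a pure Nash equilibrium. The key observation is that the set $A_k^\sigma$ of items still available when the agent $i=\sigma(k)$ is called in position $k$ depends on $\sigma$ and on $s_{-i}$ but not on $s_i$; consequently agent $i$'s expected utility decomposes as $\sum_\sigma \Pr[\sigma]\cdot u_i(\mathrm{top}_{s_i}(A_{k(\sigma,i)}^\sigma))$, and truthtelling weakly dominates $s_i$ summand-by-summand. The best-response condition at $i$ therefore forces $u_i(\mathrm{top}_{s_i}(A_{k(\sigma,i)}^\sigma))=\max_{j\in A_{k(\sigma,i)}^\sigma} u_i(j)$ for every $\sigma$, and with values distinct the maximizer is unique. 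An induction on the position $k$ along each $\sigma$ now shows $A_k^\sigma(\mathbf{s})=A_k^\sigma(\mathbf{u})$ and that the pick at position $k$ under $\mathbf{s}$ agrees with the truthful pick, giving allocation equality for every $\sigma$. Summing, $SW_{RP}(\mathbf{u},\mathbf{s})=SW_{RP}(\mathbf{u},\mathbf{u})\ge \Omega(1/\sqrt{n})\cdot SW_{OPT}(\mathbf{u})$ by~\cite{FFZ:14}.

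To dispose of ties, given a Nash equilibrium $\mathbf{s}$ on $\mathbf{u}$ I would construct a unit-sum perturbation $\tilde{\mathbf{u}}$ with all per-agent values distinct, chosen so that the ranking induced by $\tilde u_i$ coincides with the ranking encoded by $s_i$ and $|u_{ij}-\tilde u_{ij}|\le \epsilon$ for all $i,j$; adding an $\epsilon$-scale tie-breaking term consistent with $s_i$ and renormalizing suffices. On $\tilde{\mathbf{u}}$ the profile $\mathbf{s}$ is literally truthful, hence a Nash equilibrium, so the distinct-valued argument applies there. Because the Random Priority allocation is a function of $\mathbf{s}$ alone (not of the true values), we obtain $|SW_{RP}(\mathbf{u},\mathbf{s})-SW_{RP}(\tilde{\mathbf{u}},\mathbf{s})|\le \epsilon n$ and $|SW_{OPT}(\mathbf{u})-SW_{OPT}(\tilde{\mathbf{u}})|\le \epsilon n$; picking $\epsilon$ polynomially small in $n$ transfers the $\Omega(1/\sqrt{n})$ bound back to $\mathbf{u}$ with only lower-order loss.

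The main obstacle is the summand-wise step in the structural claim. The weak dominance of truthtelling in Random Priority is standard, but one needs the stronger fact that in every realization $\sigma$ the equilibrium pick equals the truthful pick, not merely an equal-utility one; exploiting the independence of $A_k^\sigma$ from $s_i$ is what allows this pointwise conclusion. This also explains why the argument breaks down for profiles with ties and necessitates the perturbation epilogue, whose only nontrivial bookkeeping is maintaining the unit-sum normalization while simultaneously realigning the ranking of $\tilde u_i$ with $s_i$.
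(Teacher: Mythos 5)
Your route is essentially the paper's route: invoke the $\Omega(1/\sqrt{n})$ welfare guarantee of \cite{FFZ:14} for truthful play, argue that with distinct per-agent values every pure Nash equilibrium yields the truthful allocation realization by realization, and handle ties by perturbation (the paper only sketches this and defers the details to the full version). Your distinct-values argument is correct, and the summand-wise reasoning -- the available set $A^\sigma_k$ does not depend on $s_i$, the equilibrium condition forces equality of each summand against the truthful deviation, uniqueness of the maximizer, induction along each $\sigma$ -- is a clean way of making the paper's informal observation precise.

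The gap is in the tie-handling step. You require a unit-sum $\tilde{\mathbf{u}}$ with $|u_{ij}-\tilde{u}_{ij}|\le\epsilon$ whose strict ranking coincides with the ranking reported in $s_i$; this implicitly assumes that an equilibrium strategy only breaks ties of $u_i$ and never strictly reverses a strict true preference. That is false: the equilibrium condition only constrains the agent's choice on available sets that actually arise given $\mathbf{s_{-i}}$, and the relative order of items that are never decision-relevant is arbitrary. Concretely, with three items $(a,b,c)$ let $u_1=(0.6,0.25,0.15)$ and let agents $2,3$ both value $b$ and $c$ at $0.4$ and $a$ at $0.2$; then $s_2=(b,c,a)$ and $s_3=(c,b,a)$ are truthful tie-breakings (hence best responses), item $a$ is available to agent $1$ in every realization, and $s_1=(a,c,b)$ is a best response even though it reverses the strict preference $b\succ_1 c$, whose value gap $0.1$ exceeds any admissible $\epsilon$. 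So the profile $(s_1,s_2,s_3)$ is a pure Nash equilibrium of a profile with ties for which your aligned $\tilde{\mathbf{u}}$ does not exist, and the claim that $\mathbf{s}$ is ``literally truthful'' on $\tilde{\mathbf{u}}$ cannot be arranged. The repair is to notice that your summand-wise argument survives ties: at any equilibrium, every realized pick is \emph{some} true-utility maximizer of the available set, and since $\mathrm{top}_{s_i}(A)$ is then the $s_i$-first element among the $u_i$-maximizers of $A$, the allocation of $\mathbf{s}$ coincides realization by realization with that of the truthful profile obtained from $\mathbf{u}$ by breaking each agent's indifferences according to $s_i$ (leaving strict preferences untouched). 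Applying \cite{FFZ:14} to that truthful refinement, or perturbing only within indifference classes and transferring the bound with the $\epsilon n$ loss as you do, completes the proof.
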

We next consider \emph{Probabilistic Serial}, which we abbreviate to $PS$.
Informally, the mechanism is the following. Each item can be viewed as an infinitely divisible good that all agents can consume at unit speed during the unit time interval $[0,1]$. Initially each agent consumes her most preferred item (or one of her most preferred items in case of ties) until the item is entirely consumed. Then, the agent moves on to consume the item on top of her preference list, among items that have not yet been entirely consumed. The mechanism terminates when all items have been entirely consumed. The fraction $p_{ij}$ of item $j$ consumed by agent $i$ is then interpreted as the probability that agent $i$ will be matched with item $j$ under the mechanism.

We prove that the Price of Anarchy of $PS$ is $O(\sqrt{n})$.
Aziz et al.~\cite{aziz2015equilibria} proved that $PS$ has pure Nash equilibria,
so it makes sense to consider the pure Price of Anarchy; we will extend the result to the coarse correlated Price of Anarchy and the Bayesian Price of Anarchy in Section \ref{app:solconcepts}.

We start with the following two lemmas, which prove that in a
pure Nash equilibrium of the mechanism an agent's utility cannot
be much worse than what her utility would be if she were consuming
the item she is matched with in the optimal allocation from the
beginning of the mechanism until the item is entirely consumed.
Let $t_j(\mathbf{s})$ be the time when item $j$ is entirely consumed
on profile $\mathbf{s}$ under $PS(\mathbf{s})$.

\begin{lemma}\label{lem:time}
Let $\mathbf{s}$ be any strategy profile and let $s_i^*$ be any strategy such that $j \succ_{s_i^*} \ell$ for all items $\ell \neq j$, i.e. agent $i$ places item $j$ on top of her preference list. Then it holds that $t_j(s_i^*,\mathbf{s_{-i}})  \geq \frac{1}{4} \cdot t_j(\mathbf{s})$.   
\end{lemma}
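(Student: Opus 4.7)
The plan is to compare the two runs of Probabilistic Serial: run A on profile $\mathbf{s}$ and run B on $(s_i^*, \mathbf{s}_{-i})$. Set $t = t_j(\mathbf{s})$ and $t' = t_j(s_i^*, \mathbf{s}_{-i})$, and for each item $\ell$ let $A_\ell(\tau), B_\ell(\tau)$ denote the cumulative amount of $\ell$ consumed by time $\tau$ in the two runs, with $R_A(\tau), R_B(\tau)$ the instantaneous consumption rates of $j$. A basic structural fact I will use is that $R_A$ is non-decreasing on $[0, t]$: once an agent reaches $j$ she stays on $j$ until depletion, so the set of $j$-eaters only grows with $\tau$.

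The crux is a \emph{Consumption Dominance Lemma}: for every item $\ell \neq j$ and every $\tau \in [0, t']$, $A_\ell(\tau) \geq B_\ell(\tau)$. Intuitively, moving $j$ to the top of $i$'s list only redirects agent $i$'s consumption onto $j$, which can only slow the depletion of other items. I would prove it by a minimality argument. Let $\tau^* = \inf\{\tau \in [0, t'] : A_\ell(\tau) < B_\ell(\tau) \text{ for some } \ell \neq j\}$; continuity gives $A_{\ell_0}(\tau^*) = B_{\ell_0}(\tau^*)$ at the offending item $\ell_0$, with the rate of $\ell_0$ in B strictly exceeding that in A just after $\tau^*$. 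So some agent $k$ eats $\ell_0$ in B at $\tau^{*+}$ but not in A; the case $k = i$ is immediate since $i$ only eats $j$ in B. For $k \neq i$, $k$ eating $\ell_0$ in B forces every item $m$ with $m \succ_{s_k} \ell_0$ to be depleted in B, while some such $m$ is not depleted in A, giving $A_m(\tau^*) < B_m(\tau^*) = 1$. If $m \neq j$ this contradicts minimality; $m = j$ would force $\tau^* = t'$, a single boundary point irrelevant to the subsequent integration.

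Given dominance, for $\tau < t'$ any agent $k \neq i$ eating $j$ in B has all items $m \succ_{s_k} j$ depleted in B, hence also in A by dominance, so $k$ also eats $j$ in A. Therefore $R_B(\tau) \leq R_A(\tau) + 1$ on $[0, t')$. Integrating gives $1 = \int_0^{t'} R_B \, d\tau \leq t' + A_j(t')$, so $A_j(t') \geq 1 - t'$. Monotonicity of $R_A$ yields both $A_j(t') \leq t' R_A(t')$ and $1 - A_j(t') \geq R_A(t')(t - t')$; these combine to $t - t' \leq (t')^2/(1 - t')$, and a short case split on whether $t' \leq 1/2$ shows $t \leq 2t'$, which is in fact stronger than the claimed $t' \geq t/4$.

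The hard part is the Consumption Dominance Lemma. One must carefully isolate the special role of $j$: restricting to $\tau \leq t'$, when $j$ is not yet depleted in B, is precisely what prevents the pathological case $m = j$ from arising in the interior and invalidating dominance via a cascade effect where $j$'s earlier depletion in B accelerates an otherwise-higher item in $s_k$. Handling the finitely many depletion events where the piecewise-linear rates jump is also needed, though these events do not affect the integrated quantities driving the final bound.
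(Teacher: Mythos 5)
Your proof is correct, and it takes a genuinely different route from the paper's. The paper argues locally around the time $\tau$ at which the first agent other than $i$ begins consuming $j$ in $(s_i^*,\mathbf{s_{-i}})$: it bounds agent $i$'s share of $j$ by $\frac{1}{2}+\frac{1}{4}t_j(\mathbf{s})$, and then uses a time-rescaling argument (the quantities $\alpha$ and $\beta$, together with non-decreasing consumption speed and the assertion that every other agent reaches $j$ no sooner in the deviated profile) to conclude $t_j(\mathbf{s^*})\geq t_j(\mathbf{s})\left(\frac{1}{2}-\frac{1}{4}t_j(\mathbf{s})\right)\geq \frac{1}{4}t_j(\mathbf{s})$. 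You instead prove a cumulative-consumption dominance lemma for all items $\ell\neq j$ up to time $t'=t_j(s_i^*,\mathbf{s_{-i}})$, deduce the pointwise rate inequality $R_B\leq R_A+1$ (with the implicit and harmless reduction to $t'\leq t$, since otherwise the claim is trivial), integrate to get $A_j(t')\geq 1-t'$, and finish with monotonicity of $R_A$; this yields $t_j(s_i^*,\mathbf{s_{-i}})\geq \frac{1}{2}t_j(\mathbf{s})$, which is strictly stronger than the stated bound and is tight (two agents, with $i$ ranking $j$ second and the other agent ranking $j$ first, gives exactly a factor of $2$). Two further remarks: first, the "no sooner" fact that the paper merely asserts ("notice that every agent $k\neq i$ starts consuming item $j$ in $\mathbf{s^*}$ no sooner than she does in $\mathbf{s}$") is precisely what your dominance lemma establishes, so your write-up is in this respect more complete than the paper's; second, like the paper, you gloss over weak preference orders (an agent indifferent between $j$ and another available item need not literally consume $j$ in both runs), so your argument is at the same level of rigor as the original on that point, and neither proof would survive a fully tie-sensitive reading without a small perturbation or tie-breaking convention.
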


\begin{proof}
	For ease of notation, let $\mathbf{s^*}=(s_i^*,\mathbf{s_{-i}})$. Obviously, if $j \succ_{s_i} \ell$ for all $\ell \neq j$ and since all other agents' reports are fixed, $t_j(\mathbf{s^*})=t_j(\mathbf{s})$ and the statement of the lemma holds. Hence, we will assume that there exists some item $j' \neq j$ such that $j' \succ_{s_i} j$.  
	
	First, note that if agent $i$ is the only one consuming item $j$ for the duration of the mechanism, then $t_j(\mathbf{s^*})=1$ and we are done. Hence, assume that at least one other agent consumes item $j$ at some point, and let $\tau$ be the time when the first agent besides agent $i$ starts consuming item $j$ in $\mathbf{s^*}$. Obviously, $t_j(\mathbf{s^*}) > \tau$, therefore if $\tau \geq \frac{1}{4} \cdot t_j(\mathbf{s})$ then $t_j(\mathbf{s^*}) \geq \frac{1}{4}\cdot t_j(\mathbf{s}) $ and we are done. So assume that $\tau < \frac{1}{4} \cdot t_j(\mathbf{s})$. Next observe that in the interval $[\tau,t_j(\mathbf{s^*})]$, agent $i$ can consume at most half of what remains of item $i$ because there exists at least one other agent consuming the item for the same duration. Overall, agent $i$'s consumption is at most $\frac{1}{2}+\frac{1}{4}t_j(\mathbf{s})$ so at least $\frac{1}{2}-\frac{1}{4}t_j(\mathbf{s})$ of the item will be consumed by the rest of the agents.
	
	Now consider all agents other than $i$ in profile $\mathbf{\mathbf{s}}$ and let $\alpha$ be the the amount of item $j$ that they have consumed by time $t_j(\mathbf{s})$. Notice that the total consumption speed of an item is non-decreasing in time which means in particular that for any $0 \leq \beta \leq 1$, agents other than $i$ need at least $\beta t_j(\mathbf{s})$ time to consume $\alpha \cdot \beta$ in profile $\mathbf{s}$. Next, notice that since agent $i$ starts consuming item $j$ at time $0$ in $\mathbf{\mathbf{s^*}}$ and all other agents use the same strategies in $\mathbf{s}$ and $\mathbf{s^*}$, it holds that every agent $k \neq i$ starts consuming item $j$ in $\mathbf{s^*}$ no sooner than she does in $\mathbf{s}$. This means that in profile $\mathbf{s^*}$, agents other than $i$ will need more time to consume $\beta \cdot \alpha$; in particular they will need at least $\beta t_j(\mathbf{s})$ time, so $t_j(\mathbf{s^*}) \geq \beta t_j(\mathbf{s})$. However, from the previous paragraph we know that they will consume at least $\frac{1}{2}-\frac{1}{4}t_j(\mathbf{s})$, so letting $\beta = \frac{1}{\alpha}\left(\frac{1}{2} -  \frac{1}{4}t_j(\mathbf{s}) \right) $ we get
	\begin{eqnarray*}
		t_j(\mathbf{s^*}) \geq \beta t_j(\mathbf{s}) \geq t_j(\mathbf{s}) \left(\frac{1}{2} -  \frac{1}{4}\cdot t_j(\mathbf{s}) \right) \frac{1}{\alpha}  \\
		\geq t_j(\mathbf{s}) \left(\frac{1}{2} -  \frac{1}{4}\cdot t_j(\mathbf{s}) \right) \geq \frac{1}{4} \cdot t_j(\mathbf{s})
	\end{eqnarray*}
\end{proof}	
Now we can lower bound the utility of an agent at any pure Nash equilibrium.

\begin{lemma}\label{lem:tv}
Let $\mathbf{u}$ be the profile of true agent valuations and let $\mathbf{s}$ be a pure Nash equilibrium. For any agent $i$ and any item $j$ it holds that the utility of agent $i$ at $\mathbf{s}$ is at least $\frac{1}{4} \cdot t_{j}(\mathbf{s}) \cdot u_{ij}$.
\end{lemma}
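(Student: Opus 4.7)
The plan is to use Lemma~\ref{lem:time} together with the Nash equilibrium property, by constructing a specific one-agent deviation that focuses all of agent $i$'s consumption on item $j$. Let $s_i^*$ be the strategy in which agent $i$ places $j$ at the top of her preference list. Under $PS(s_i^*,\mathbf{s_{-i}})$, the rules of the mechanism dictate that agent $i$ consumes item $j$ continuously at unit speed starting at time $0$ until it is entirely consumed, which, by definition of $t_j$, happens at time $t_j(s_i^*,\mathbf{s_{-i}})$. Hence the probability that she is matched with $j$ satisfies $p^{PS,(s_i^*,\mathbf{s_{-i}})}_{ij} \geq t_j(s_i^*,\mathbf{s_{-i}})$.

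Using non-negativity of the valuations (a consequence of the unit-sum convention), I would then bound the utility of agent $i$ from the deviation $s_i^*$ from below by the contribution of item $j$ alone:
\[
\EE[u_i(PS_i(s_i^*,\mathbf{s_{-i}}))] \;\geq\; p^{PS,(s_i^*,\mathbf{s_{-i}})}_{ij} \cdot u_{ij} \;\geq\; t_j(s_i^*,\mathbf{s_{-i}}) \cdot u_{ij}.
\]
Applying Lemma~\ref{lem:time} to the right-hand side gives $t_j(s_i^*,\mathbf{s_{-i}}) \geq \tfrac{1}{4}\cdot t_j(\mathbf{s})$, so the deviation yields utility at least $\tfrac{1}{4}\cdot t_j(\mathbf{s})\cdot u_{ij}$.

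Finally, I would invoke the Nash equilibrium hypothesis: since $\mathbf{s}$ is a pure Nash equilibrium, agent $i$ has no profitable deviation, so her utility at $\mathbf{s}$ is at least her utility at $(s_i^*,\mathbf{s_{-i}})$, which is at least $\tfrac{1}{4}\cdot t_j(\mathbf{s})\cdot u_{ij}$, as required. The only subtlety — and the place where the argument could go wrong if one were careless — is justifying that $p^{PS,(s_i^*,\mathbf{s_{-i}})}_{ij}$ really equals $t_j(s_i^*,\mathbf{s_{-i}})$; this uses the PS rule that an agent always consumes her highest-ranked remaining item at unit speed, so that placing $j$ on top means agent $i$ spends exactly the interval $[0,t_j(s_i^*,\mathbf{s_{-i}})]$ consuming it. After that, the argument is essentially a one-line chain of inequalities, so no further calculation is needed.
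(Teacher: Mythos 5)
Your proposal is correct and follows essentially the same route as the paper: deviate to a strategy placing $j$ on top, lower-bound the deviation utility by agent $i$'s consumption of item $j$ (which equals $t_j(s_i^*,\mathbf{s_{-i}})$ since she consumes it at unit speed from time $0$), apply Lemma~\ref{lem:time}, and invoke the pure Nash equilibrium condition. Your explicit justification that the consumption of $j$ equals $t_j(s_i^*,\mathbf{s_{-i}})$ is exactly the step the paper states more tersely.
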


\begin{proof}
Let $\mathbf{s'}=(s'_{i},\mathbf{s_{-i}})$ be the strategy profile obtained from $\mathbf{s}$ when agent $i$ deviates to the strategy $s_i'$ where $s_i'$ is some strategy such that $j \succ_{s_i'} \ell$ for all items $\ell \neq j$. Since $\mathbf{s}$ is a pure Nash equilibrium, it holds that $u_i(PS_i(\mathbf{s})) \geq u_i(PS_i(\mathbf{s'})) \geq t_j(\mathbf{s'})\cdot u_{ij}$, where the last inequality holds since the utility of agent $i$ is at least as much as the utility she obtains from the consumption of item $j$.  By Lemma \ref{lem:time}, it holds that $t_j(\mathbf{s'})  \geq \frac{1}{4} \cdot t_j(\mathbf{s})$ and hence $u_i(PS_i(\mathbf{s})) \geq \frac{1}{4} \cdot t_j(\mathbf{s})\cdot u_{ij}$.  
\end{proof}
We can now prove the pure Price of Anarchy guarantee of the mechanism.

\begin{theorem}\label{thm:PSPOA}
The pure Price of Anarchy of Probabilistic Serial is $O(\sqrt{n})$.
\end{theorem}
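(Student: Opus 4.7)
The plan is to combine Lemma~\ref{lem:tv} (applied along the optimal matching) with a threshold partition of the agents, and to obtain a separate, unconditional lower bound $W\geq 1$ from the envy-freeness of truth-telling in $PS$. Fix a pure Nash equilibrium $\mathbf{s}$, and write $\pi=OPT(\mathbf{u})$, $v_i=u_{i,\pi(i)}$, $t_i=t_{\pi(i)}(\mathbf{s})$ and $W=SW_{PS}(\mathbf{u},\mathbf{s})$. Lemma~\ref{lem:tv} with $j=\pi(i)$ gives immediately the pointwise bound $W_i\geq \tfrac14 t_i v_i$, and hence $W\geq \tfrac14\sum_i t_i v_i$.

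To bound $SW_{OPT}(\mathbf{u})=\sum_i v_i$, I would fix the threshold $T=1/\sqrt{n}$ and split the agents into $A=\{i:t_i\geq T\}$ and $B=\{i:t_i<T\}$. For $A$ the pointwise bound rearranges to $v_i\leq (4/T)W_i$, so $\sum_{i\in A} v_i\leq (4/T)W=4\sqrt{n}\,W$. For $B$ each item $\pi(i)$ has been entirely consumed by time $T$ and so contributes one full unit of consumption within $[0,T]$; since the $n$ agents can together consume at most $nT=\sqrt{n}$ units of mass during $[0,T]$, this gives $|B|\leq\sqrt{n}$, and combined with $v_i\leq 1$ (from the unit-sum normalisation) yields $\sum_{i\in B} v_i\leq\sqrt{n}$. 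Summing the two parts,
\[
SW_{OPT}(\mathbf{u})\;\leq\;4\sqrt{n}\,W\;+\;\sqrt{n}.
\]

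The remaining step is to absorb the additive $\sqrt{n}$ by proving $W\geq 1$ unconditionally, which I would do via an envy-freeness argument applied to the deviation $s_i'=u_i$ (truthful reporting). Under truth-telling, at every time $t$ agent $i$ consumes her genuinely most preferred available item; hence, for the top-$r$ set $S_r$ of her true preference, she spends on $S_r$ exactly the time $\tau_r=\max_{j\in S_r}t_j(u_i,\mathbf{s}_{-i})$ (she is eating from $S_r$ at every instant before $\tau_r$), while any other agent spends time at most $\tau_r$ on $S_r$ during $[0,\tau_r]$. This first-order stochastic dominance gives $\sum_j p_{ij}^{(u_i,\mathbf{s}_{-i})} u_{ij}\geq\sum_j p_{kj}^{(u_i,\mathbf{s}_{-i})} u_{ij}$ for every $k$; averaging over $k$ and using $\sum_k p_{kj}^{(u_i,\mathbf{s}_{-i})}=1$ yields $W_i\geq W_i'\geq 1/n$, so $W\geq 1$. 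Plugging this back into the displayed inequality gives $SW_{OPT}\leq 5\sqrt{n}\,W$ and therefore $PoA(PS)\leq 5\sqrt{n}=O(\sqrt{n})$.

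The main obstacle is really the small-$SW_{OPT}$ regime: the partition step uses $v_i\leq 1$ rather than any finer structure, so the inequality $SW_{OPT}\leq 4\sqrt{n}\,W+\sqrt{n}$ alone gives a useful bound on $SW_{OPT}/W$ only when $SW_{OPT}=\Omega(\sqrt{n})$, and to absorb the leftover additive $\sqrt{n}$ one genuinely needs the separate envy-freeness lower bound $W\geq 1$; combining the two bounds is what yields the matching $O(\sqrt{n})$ price of anarchy.
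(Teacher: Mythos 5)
Your proposal is correct, and it rests on the same two pillars as the paper's proof: Lemma~\ref{lem:tv} applied along the optimal matching, and the observation that truthful play stochastically dominates the uniform allocation $(1/n,\ldots,1/n)$, so every equilibrium utility is at least $1/n$ and $W\geq 1$ (the paper proves this directly via the "by time $k/n$ at most $k$ units of mass are gone" argument; your envy-freeness-style derivation is the same fact, proved as in Lemma~\ref{lem:envyfree}). Where you genuinely diverge is in the final accounting. The paper sorts items by consumption time, uses mass conservation in the form $t_j(\mathbf{s})\geq j/n$, and then splits into the case $SW_{OPT}(\mathbf{u})\leq\sqrt{n}$ (handled by $W\geq 1$) versus $SW_{OPT}(\mathbf{u})>\sqrt{n}$, where it optimizes the ratio $4\sum_j w_{i_j}\big/\sum_j j\,w_{i_j}/n$ over value profiles, arguing the worst case is $0/1$-valued and getting $8n/(k-1)$ with $k>\sqrt{n}$. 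You instead use the same mass-conservation fact in threshold form: items finished before $T=1/\sqrt{n}$ number at most $nT=\sqrt{n}$, so the agents matched to them in the optimum contribute at most $\sqrt{n}$ to $SW_{OPT}$, while for the remaining agents $v_i\leq 4\sqrt{n}\,W_i$; this yields $SW_{OPT}\leq 4\sqrt{n}\,W+\sqrt{n}$, and the additive term is absorbed by $W\geq 1$, giving $PoA\leq 5\sqrt{n}$ with no case split and no extremal argument over the values $w_{i_j}$ (a step the paper states somewhat loosely). So the two proofs buy the same asymptotics; yours trades the paper's sorted-index/optimization bookkeeping for a cleaner threshold partition with an explicit constant.
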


\begin{proof}
Let $\mathbf{u}$ be any profile of true agents' valuations and let $\mathbf{s}$ be any pure Nash equilibrium. First, note that by reporting truthfully, every agent $i$ can get an allocation that is at least as good as $\left(1/n,\ldots, 1/n\right)$, regardless of other agents' strategies. To see this, first consider time $t=1/n$ and observe that during the interval $[0,1/n]$, agent $i$ is consuming her favorite item (say $a_1$) and hence $p_{ia_1} \geq 1/n$. Next, consider time $\tau=2/n$ and observe that during the interval $[0,2/n]$, agent $i$ is consuming one or both of her two favorite items ($a_1$ and $a_2$) and hence $p_{ia_1}+p_{ia_2} \geq 2/n$. By a similar argument, for any $k$, it holds that $\sum_{j=1}^np_{ia_j} \geq k/n$. This implies that regardless of other agents' strategies, agent $i$ can achieve a utility of at least $\frac{1}{n}\sum_{j=1}^{n} u_{ij}$. Since $\mathbf{s}$ is a pure Nash equilibrium, it holds that $u_i(PS_i(\mathbf{s})) \geq (1/n)\sum_{j=1}^{n} u_{ij}$ as well. Summing over all agents, we get that $SW_{PS}(\mathbf{u,s}) \geq (1/n) \sum_{i=1}^n \sum_{j=1}^n u_{ij}  = 1$. If $SW_{OPT}(\mathbf{u}) \leq \sqrt{n}$ then we are done, so assume $SW_{OPT}(\mathbf{u}) > \sqrt{n}$.
	
	Because $PS$ is neutral we can assume $t_j(\mathbf{s}) \leq t_{j'}(\mathbf{s})$ for $j<j'$ without loss of generality. Observe that for all $j=1,\ldots,n$, it holds that $t_j(\mathbf{s})  \geq j/n$. This is true because for any $t
	\in [0,1]$, by time $t$, exactly $tn$ mass of items must have been consumed by the agents. Since $j$ is the $j$th item that is entirely consumed, by time $t_j(\mathbf{s}) $, the mass of items that must have been consumed is at least $j$. By this, we get that $t_j(\mathbf{s}) \cdot  n \geq j$, which implies $t_j(\mathbf{s})  \geq j/n$. 
	
	For each $j$ let $i_j$ be the agent that gets item $j$ in the optimal allocation and for ease of notation, let $w_{i_j}$ be her valuation for the item. Now by Lemma \ref{lem:tv}, it holds that \[u_{i_j}(PS(\mathbf{s})) \geq \frac{1}{4} \frac{j}{n} w_{i_j} \ \ \text{ and } \ \ SW_{PS}(\mathbf{u,s}) \geq \frac{1}{4}\sum_{j=1}^{n}  \frac{j}{n} w_{i_j}.\] The Price of Anarchy is then at most \[\frac{4\sum_{j=1}^{n}w_{i_j}}{\sum_{j=1}^{n} j\cdot w_{i_j} /n}.\]
	Consider the case when the above ratio is maximized and let $k$ be an integer such that $k\le\sum_{j=1}^{n} w_{i_j} \le k+1$. Then it must be that $w_{i_j}=1$ for $j=1,\ldots,k$ and $w_{i_j}=0$, for $k+2\leq i_j\leq n$. Hence the maximum ratio is $(k+w_{i_{k+1}})/(aw_{i_{k+1}}+b)$, for some $a$, $b>0$, which is monotone for $w_{i_{k+1}}$ in $[0,1]$. Therefore, the maximum value of $(k+w_{i_{k+1}})/(aw_{i_{k+1}}+b)$ is achieved when either $w_{i_{k+1}}=0$ or $w_{i_{k+1}}=1$. As a result, the maximum value of the ratio is obtained when $\sum_{i=1^n}w_{i_{k+1}}=k$ for some $k$. By simple calculations, the Price of Anarchy should be at most:
	\begin{eqnarray*}
		\frac{4k}{\sum_{j=1}^{k} \frac{j}{n}} \leq \frac{4k}{ \frac{k(k-1)}{2n}} = \frac{8n}{k-1},
	\end{eqnarray*}
	so the Price of Anarchy is maximized when $k$ is minimized. By the argument earlier, $k > \sqrt{n}$ and hence the ratio is $O(\sqrt{n})$.
\end{proof}
In Section \ref{app:solconcepts}, we extend Theorem \ref{thm:PSPOA} to
broader solution concepts and the incomplete information setting.
\section{Lower bounds}\label{sec:Lower}

In this section, we prove our main lower bound. Note that the result holds for any mechanism, including randomized and cardinal mechanisms. Since we are interested in mechanisms with good properties, it is natural to consider those mechanisms that have pure Nash equilibria.

\begin{theorem}\label{thm:loweranyunitsum}
	The pure Price of Anarchy of any mechanism is $\Omega(\sqrt{n})$.
\end{theorem}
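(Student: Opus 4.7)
The plan is, given any randomized cardinal mechanism $M$ that admits at least one pure Nash equilibrium, to exhibit a valuation profile $\mathbf{u}$ and a NE $\mathbf{s}$ witnessing $SW_{OPT}(\mathbf{u})/SW_M(\mathbf{u},\mathbf{s}) = \Omega(\sqrt{n})$. Since the lower bound must be uniform across all mechanisms, both $\mathbf{u}$ and $\mathbf{s}$ will be built adaptively from $M$.

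I would first run $M$ on the symmetric strategy profile $\mathbf{s}^* = (s^*,\dots,s^*)$ for a canonical $s^*$, for instance the uniform valuation $s^*(j) = 1/n$, and extract the marginal allocation probabilities $q_{ij} = \Pr[M_i(\mathbf{s}^*)=j]$. These form a doubly-stochastic matrix, since $M$ outputs a matching. Setting $k = \lceil\sqrt{n}\rceil$, I select, for each agent $i$, a set $T_i$ of $k$ items on which agent $i$'s probability mass is small, namely $\sum_{j \in T_i} q_{ij} = O(k/n)$; such $T_i$ exist by the bottom-$k$-of-$n$ averaging bound applied to each row. A Hall/Koenig argument on the $0/1$ matrix whose entries indicate $q_{ij}\le 2/n$ (each row and column of which has sum $\ge n/2$) then produces a system of distinct representatives $i \mapsto j(i)$ with $j(i) \in T_i$ and $q_{i,j(i)} \le 2/n$.

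The true valuations $\mathbf{u}$ are then chosen to concentrate agent $i$'s unit of value on $T_i$: $u_i(j) \approx 1/k$ for $j \in T_i$ and $u_i(j) \approx 0$ otherwise, with a small perturbation to keep values strictly positive and unit-sum. This yields two things at once: (i) $SW_M(\mathbf{u},\mathbf{s}^*) = \sum_i (1/k) \sum_{j \in T_i} q_{ij} = O(1)$; and (ii) the SDR matching $i \mapsto j(i)$ gives each agent utility $\approx 1/k$, so $SW_{OPT}(\mathbf{u}) \ge n/k = \Omega(\sqrt{n})$. Thus on this profile the welfare ratio is already $\Omega(\sqrt{n})$ at $\mathbf{s}^*$.

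The central obstacle is showing that $\mathbf{s}^*$ (or some profile close to it) is itself a pure Nash equilibrium of the game induced by $\mathbf{u}$: a unilateral deviation $s^*_i \to s'_i$ could, in principle, induce $M$ to place up to $1/k = 1/\sqrt{n}$ of utility onto $T_i$, well above the $O(1/n)$ utility agent $i$ currently enjoys at $\mathbf{s}^*$. My plan to address this is to invoke the assumed existence of some pure NE and track a best-response dynamic starting from $\mathbf{s}^*$: matching feasibility $\sum_i p_{ij} \le 1$ caps the cumulative welfare gain along the dynamic, and the uniform ``floor'' introduced by the perturbation of $\mathbf{u}$ prevents cycling, so the NE eventually reached still has welfare $O(1)$ while $SW_{OPT}(\mathbf{u})$ stays at $\Omega(\sqrt{n})$. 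Verifying that the doubly-stochastic structure underlying bound (i) survives the sequential best responses is the delicate step; it will require a convex/averaging argument combined with a careful choice of the perturbation magnitude, leveraging the column constraint $\sum_i q_{ij} = 1$ that no single item can absorb too much mass.
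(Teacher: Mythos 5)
Your construction does produce a profile $\mathbf{u}$ on which the welfare at the symmetric profile $\mathbf{s}^*$ is $O(1)$ while $SW_{OPT}(\mathbf{u})=\Omega(\sqrt{n})$ (modulo a small fix: Hall's condition can fail if you insist $j(i)\in T_i$ with $|T_i|=k$ --- for the uniform doubly stochastic matrix all $T_i$ could be chosen as the same $k$-set --- so you should first extract the SDR from the dense bipartite graph of entries with $q_{ij}\le 2/n$ and then include $j(i)$ in $T_i$). But the theorem is about equilibrium welfare, and here the proposal has a genuine gap: $\mathbf{s}^*$ need not be a Nash equilibrium of the game induced by $\mathbf{u}$, and the best-response-dynamics patch does not close the gap. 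First, for an arbitrary cardinal mechanism the induced game has an infinite strategy space and no potential structure, so best-response dynamics need not converge at all; the assumed existence of some pure NE does not imply the dynamic reaches one, and the claim that the perturbation ``prevents cycling'' is unsupported. Second, and more importantly, even if the dynamic reached an equilibrium, nothing caps its welfare at $O(1)$: the feasibility constraint $\sum_i p_{ij}\le 1$ only bounds the welfare of \emph{any} profile under your $\mathbf{u}$ by roughly $n/k=\sqrt{n}$, which is exactly the optimum, so the cumulative welfare gain along the dynamic can be $\Theta(\sqrt{n})$. Since your $\mathbf{u}$ is built solely from $M$'s behavior at the single non-equilibrium profile $\mathbf{s}^*$, you have no control over $M$'s behavior, or over the welfare of its equilibria, anywhere else; for a well-designed mechanism all equilibria under this concentrated profile could have welfare $\Theta(\sqrt{n})$, in which case this profile simply does not witness the bound.

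The paper avoids this by running the argument in the opposite order. It starts from a nearly uniform profile $\mathbf{u}$ (each agent in group $G_j$ values item $j$ at $1/n+\alpha$ and every other item at $1/n-\alpha/(n-1)$), under which \emph{every} allocation --- hence every equilibrium --- has welfare $O(1)$ trivially. It then fixes an arbitrary pure NE $\mathbf{s}$, uses pigeonhole to pick in each group an agent $i_j$ with $p_{i_j j}\le 1/\sqrt{n}$, and modifies only those $\sqrt{n}$ agents' true values to be concentrated entirely on item $j$. The crucial step is an equilibrium-transfer argument: under the old values an agent's utility is an increasing function of $p_{ij}$ alone, so any deviation that would help her under the new values (i.e.\ raise $p_{ij}$) would already have been profitable under the old ones, contradicting that $\mathbf{s}$ was a NE; hence $\mathbf{s}$ remains a NE under the new profile, where the optimum is at least $\sqrt{n}$ but the equilibrium welfare is still $O(1)$. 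Some device of this kind --- transferring a known equilibrium between two valuation profiles with aligned incentives, rather than appealing to dynamics --- is the missing idea in your proposal.
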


\begin{proof}
	Assume $n=k^2$ for some $k \in \NN$. Let $M$ be a mechanism and consider the following valuation profile $\mathbf{u}$. There are $\sqrt{n}$ sets of agents and let $G_j$ denote the $j$-th set. For every $j \in \{1,\ldots,\sqrt{n}\}$ and every agent $i \in G_j$, it holds that $u_{ij}=1/n+\alpha$ and $u_{ik} = 1/n-\alpha/(n-1)$, for $k\neq j$, where $\alpha$ is sufficiently small. Let $\mathbf{s}$ be a pure Nash equilibrium and for every set $G_j$, let $i_j = \arg \min_{i \in G_j} p_{ij}^{M,\mathbf{s}}$ (break ties arbitrarily). Observe that for all $j=1,\ldots,\sqrt{n}$, it holds that $p_{i_{j}j}^{M,\mathbf{s}} \leq 1/\sqrt{n}$ and let $I=\{i_1,i_2,\ldots,i_{\sqrt{n}}\}$. Now consider the valuation profile $\mathbf{u'}$ where:
	\begin{itemize}
		\item For every agent $i \notin I$, $u_i' = u_i$.
		\item For every agent $i_j \in I$, let $u_{i_j j}' = 1$ and $u_{i_jk}' = 0$ for all $k \neq j$.
	\end{itemize}
	We claim that $\mathbf{s}$ is a pure Nash equilibrium under $\mathbf{u'}$ as well. For agents not in $I$, the valuations have not changed and hence they have no incentive to deviate. Assume now for contradiction that some agent $i \in I$ whose most preferred item is item $j$ could deviate to some beneficial strategy $s_i'$. Since agent $i$ only values item $j$, this would imply that $p_{ij}^{M,(s_i',\mathbf{s_{-i})}} > p_{ij}^{M,\mathbf{s}}$.	However, since agent $i$ values all items other than $j$ equally under $u_i$ and her most preferred item is item $j$, such a deviation would also be beneficial under profile $\mathbf{u}$, contradicting the fact that $\mathbf{s}$ is a pure Nash equilibrium.
	
	Now consider the expected social welfare of $M$ under valuation profile $\mathbf{u'}$ at the pure Nash equilibrium $\mathbf{s}$. For agents not in $I$ and taking $\alpha$ to be less than $1/n^3$, the contribution to the social welfare is at most $1$. For agents in $I$, the contribution to the welfare is then at most $(1/\sqrt{n}) \sqrt{n} + 1$ and hence the expected social welfare of $M$ is at most $3$. As the optimal social welfare is at least $\sqrt{n}$, the bound follows.
\end{proof}
Interestingly, if we restrict our attention to {\em deterministic} mechanisms,
then we can prove that only trivial pure Price of Anarchy guarantees are achievable.  

\begin{theorem}\label{thm:deterpoa}
	The pure Price of Anarchy of any deterministic mechanism is $\Omega(n^2)$.
\end{theorem}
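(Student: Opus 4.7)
The plan is to sharpen the modification technique of Theorem~\ref{thm:loweranyunitsum} by exploiting the integrality of deterministic allocations. For a deterministic mechanism $M$, each allocation probability $p_{ij}^{M,\mathbf{s}}$ is either $0$ or $1$, so the ``minimum probability'' agents from Theorem~\ref{thm:loweranyunitsum} actually receive probability $0$ of their preferred item, which permits a much more aggressive valuation perturbation than in the randomized setting.

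First I would fix a near-uniform base profile $\mathbf{u}^0$ with $u^0_{ii} = 1/n + \alpha$ and $u^0_{ij} = 1/n - \alpha/(n-1)$ for $j \neq i$, where $\alpha > 0$ is small. Let $\mathbf{s}$ be any pure Nash equilibrium under $M$ and $\mu = M(\mathbf{s})$; set $T = \{i : \mu(i) \neq i\}$. For each $i \in T$ we have $p_{ii}^{M,\mathbf{s}} = 0$, and the Nash condition combined with the fact that $u^0_i$ strictly prefers item $i$ by $\alpha n/(n-1)$ over every other item implies that no unilateral deviation can produce $M_i = i$---otherwise agent $i$ would strictly improve. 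I would then modify: for every $i \in T$, replace $u^0_i$ with $u'_i$ placing mass $1 - (n-1)\epsilon$ on item $i$ and $\epsilon$ on every other item, with $\epsilon = \Theta(1/n^3)$; leave $u'_i = u^0_i$ for $i \notin T$. Mirroring the argument of Theorem~\ref{thm:loweranyunitsum}, $\mathbf{s}$ remains a pure Nash equilibrium of $\mathbf{u}'$: agents outside $T$ are unchanged, and for agents in $T$ item $i$ stays unreachable while every other item contributes the same utility $\epsilon$. A direct computation gives $SW_M(\mathbf{u}',\mathbf{s}) \leq |T|\epsilon + (n-|T|)(1/n+\alpha)$ and $SW_{OPT}(\mathbf{u}') \geq |T|(1 - (n-1)\epsilon)$, so the resulting Price of Anarchy is $\Omega(n^2)$ as soon as $|T| \geq n - O(1)$; in the extreme case $|T| = n$ the bound sharpens to $\Omega(n^3)$.

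The main obstacle is guaranteeing that $|T|$ is close to $n$: a ``well-behaved'' $M$ could produce a NE matching on $\mathbf{u}^0$ that coincides with the identity, giving $|T| = 0$ and a trivial bound. The plan is to consider alternative base profiles $\mathbf{u}^\pi$ in which agent $i$ is slightly biased toward item $\pi(i)$ for a carefully chosen permutation $\pi$, and apply the same argument with ``item $i$'' replaced by ``item $\pi(i)$''. An averaging or iterative argument over $\pi$ then yields a profile whose equilibrium matching has $\Omega(n)$ mismatches with $\pi$; from that profile the construction proceeds as above. This permutation-selection step is the technical core of the proof, since the equilibrium $\mathbf{s}^\pi$ depends on $\pi$ and the existence of a suitable $\pi$ must leverage the rigidity of deterministic matching---no deterministic mechanism can reproduce $\pi$ at a NE of $\mathbf{u}^\pi$ simultaneously for every $\pi$.
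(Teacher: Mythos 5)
Your swap step is fine as far as it goes (for $i\in T$ the equilibrium condition under $\mathbf{u}^0$ does imply that no deviation can ever yield item $i$, so $\mathbf{s}$ survives the valuation change), but the argument has a genuine gap exactly at the step you defer, and that step cannot be repaired in the form you propose. Your base profiles $\mathbf{u}^\pi$ bias each agent toward a \emph{distinct} item, so the mechanism can satisfy all agents simultaneously, and there are deterministic mechanisms that do so at \emph{every} Nash equilibrium of \emph{every} such profile. Serial dictatorship is a concrete counterexample: at $\mathbf{u}^\pi$ the first agent must receive $\pi(1)$ in any equilibrium (otherwise ranking $\pi(1)$ first is a strictly improving deviation), and inductively agent $i$ must receive $\pi(i)$; hence $|T|=0$ for all $\pi$, and no averaging or iteration over permutations can produce the mismatches your construction needs. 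So the closing claim that ``no deterministic mechanism can reproduce $\pi$ at a NE of $\mathbf{u}^\pi$ simultaneously for every $\pi$'' is false, and serial dictatorship is precisely a mechanism for which Theorem~\ref{thm:deterpoa} must still be proved. There is also an internal quantitative mismatch: your own calculation requires $|T|\geq n-O(1)$ to reach $\Omega(n^2)$, whereas the deferred permutation argument is only claimed to give $\Omega(n)$ mismatches; with $n-|T|=\Theta(n)$ the non-$T$ agents contribute $\Theta(1)$ welfare and the ratio degrades to $\Omega(n)$.

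The paper's proof sidesteps this by making competition unavoidable: the base profile gives \emph{all} agents identical valuations (everyone prefers item $1$ to item $2$ to $\dots$, with nearly uniform values), so whatever equilibrium allocation arises can be relabeled as $M_i(\mathbf{s})=i$. It then modifies agent $i$ ($i\geq 2$) to place value close to $1$ on item $i-1$ -- the item held by her ``predecessor'' -- while keeping the relative order of the items she prefers to item $i$ unchanged; since the set $\{1,\dots,i-1\}$ of items better than her current item is the same under both profiles, any profitable deviation under the new profile would already have been profitable under the old one, so $\mathbf{s}$ remains an equilibrium. Now the optimum is at least $n-2$ while the equilibrium welfare is $O(1/n)$ (agent $1$ contributes about $1/n$ and each other agent at most $1/n^3$), giving $\Omega(n^2)$. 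The essential idea you are missing is this use of a common-preference profile to force every agent except one to be stuck with a low-value item she cannot improve upon, before reassigning the large valuations along the ``diagonal shifted by one.''
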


\begin{proof}
Let $M$ be a deterministic mechanism that always has a pure Nash equilibrium.
Let $\mathbf{u}$ be a valuation profile such that for for all agents $i$ and $i'$,
it holds that $u_i=u_{i'}$, $u_{i1}=1/n+ 1/n^3$ and $u_{ij} > u_{ik}$ for $j < k$.
Let $\mathbf{s}$ be a pure Nash equilibrium for this profile and assume without loss of generality that $M_i(\mathbf{s})=i$. 
	
	Now fix another true valuation profile $\mathbf{u'}$ such that $u'_1=u_1$ and for agents $i=2,\ldots,n$, $u'_{i,i-1} = 1-\epsilon'_{i,i-1}$ and $u_{ij} = \epsilon'_{ij}$ for $j\neq i-1$, where $0 \leq \epsilon'_{ij} \leq 1/n^3$, $\sum_{j \neq i-1} \epsilon'_{ij} = \epsilon'_{i,i-1}$ and $\epsilon'_{ij} > \epsilon'_{ik}$ if $j < k$ when $j,k \neq i-1$. Intuitively, in profile $\mathbf{u'}$, each agent $i \in \{2,\ldots,n\}$ has valuation close to $1$ for item $i-1$ and small valuations for all other items. Futhermore, she prefers items with smaller indices, except for item $i-1$.
	
	We claim that $\mathbf{s}$ is a pure Nash equilibrium under true valuation profile $\mathbf{u}$ as well. Assume for contradiction that some agent $i$ has a benefiting deviation, which matches her with an item that she prefers more than $i$. But then, since the set of items that she prefers more than $i$ in both $\mathbf{u}$ and $\mathbf{u'}$ is $\{1,\ldots,i\}$, the same deviation would match her with a more preferred item under $\mathbf{u}$ as well, contradicting the fact that $\mathbf{s}$ is a pure Nash equilibrium. It holds that $SW_{OPT}(\mathbf{u'})\geq n-2$ whereas the social welfare of $M$ is at most $2/n$ and the theorem follows.
\end{proof}
The mechanism that naively maximizes the sum of
the reported valuations with no regard to incentives, when equipped with
a lexicographic tie-breaking rule has pure Nash equilibria and also
achieves the above ratio in the worst-case, which means that the bounds are tight.

\section{General solution concepts}\label{app:solconcepts}

In the previous sections, we employed the pure Nash equilibrium
as the solution concept for bounding the inefficiency of mechanisms,
mainly because of its simplicity.
Here, we describe how to extend our results to broader well-known equilibrium
concepts in the literature. Due to lack of space, we will only discuss the two most general solution concepts, the \emph{coarse correlated equilibrium} for complete information and the \emph{Bayes-Nash equilibrium} for incomplete information. Since other concepts (like the mixed-Nash equilibrium for instance) are special cases of those two, it suffices to use those for our extensions.

 \begin{definition}
Given a mechanism $M$, let $\mathbf{q}$ be a distribution over strategies. Also, for any distribution $\Delta$ let $\Delta_{-i}$ denote the marginal distribution without the $i$th index. Then a strategy profile $\mathbf{q}$ is called a
\begin{enumerate}
\item \emph{coarse correlated equilibrium} if \[\EE_{\mathbf{s}\sim \mathbf{q}}[u_i(M_i(\mathbf{s}))]\geq \EE_{\mathbf{s}\sim \mathbf{q}}[u_i(M_i((s'_i,\mathbf{s_{-i}})))],\]
\item \emph{Bayes-Nash equilibrium} for a distribution $\Delta_u$ where each $(\Delta_u)_i$ is independent, if when $\mathbf{u} \sim \Delta_u$ then $\mathbf{q(u)}=\times_{i}q_i(u_i)$ and for all $u_i$ in the support of $(\Delta_u)_i$,
\begin{small} \[\EE_{\mathbf{u_{-i}},\mathbf{s}\sim 
	\mathbf{q(u)}}[u_i(M_i(\mathbf{s}))]\geq \EE_{\mathbf{u_{-i}, s_{-i}}\sim \mathbf{q_{-i}(u_{-i})}}[u_i(M_i(s'_i,\mathbf{s_{-i}}))]\]\end{small}
\end{enumerate}
where the given inequalities hold for all agents $i$, and (pure) deviating strategies $s'_i$. Also notice that for randomized mechanisms definitions are with respect to an expectation over the random choices of the mechanism. 
\end{definition}
The coarse correlated and the Bayesian Price of Anarchy are defined similarly to the pure Price of Anarchy. 

Again, first we mention that we can obtain the extensions to Random Priority rather straightforwardly, based on the fact that even when using probability mixtures over strategies, an agent will always (in every realization) pick her most preferred item among the set of available items when she is chosen. In other words, any pure strategy in the support of the distribution will rank the most preferred available item first, and the ordering of the remaining items does not affect the distribution. Since the arguments are not very involved, we leave the details for the full version.

\begin{theorem}
	The coarse correlated Price of Anarchy of Random Priority is $O(\sqrt{n})$. The Bayesian Price of Anarchy of Random Priority is $O(\sqrt{n})$.
\end{theorem}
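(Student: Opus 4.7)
The plan is to reduce both statements to the pure-equilibrium analysis of Random Priority by exploiting the fact that truthful reporting is ex-post dominant in $RP$: for every realization $\pi$ of the priority ordering and every report profile $\mathbf{s_{-i}}$ of the other agents,
\[
u_i\bigl(RP_i^{\pi}(u_i,\mathbf{s_{-i}})\bigr) \;\ge\; u_i\bigl(RP_i^{\pi}(s_i,\mathbf{s_{-i}})\bigr),
\]
since at her turn agent $i$ picks her declared favorite among the still-available items, and only her true preference ordering guarantees that this is her true favorite.

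For the coarse correlated claim, fix a CCE $\mathbf q$ and an agent $i$. Averaging the pointwise dominance inequality over $(\mathbf s, \pi) \sim \mathbf q \times \mathrm{Unif}$ gives one direction, and the CCE condition applied to the constant deviation $s_i' = u_i$ gives the reverse, hence
\[
\EE_{\mathbf s \sim \mathbf q,\, \pi}\bigl[u_i(RP_i^\pi(\mathbf s))\bigr] \;=\; \EE_{\mathbf s \sim \mathbf q,\, \pi}\bigl[u_i(RP_i^\pi(u_i,\mathbf{s_{-i}}))\bigr].
\]
Because the pointwise inequality is weak but integrates to equality, on a full-measure set of realizations $(\mathbf s, \pi)$ the value of agent $i$'s pick under $\mathbf s$ equals the maximum of $u_i$ over her menu at her turn; i.e., she picks a true-favorite item. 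Assuming first that all true valuations are pairwise distinct, the true favorite in every menu is unique, and I would then induct on the $n$ turns of $RP$ under a fixed $\pi$, coupling $(\mathbf s, \pi)$ with truthful play $(\mathbf u, \pi)$: at turn $1$ both faces the full menu and pick the same unique true favorite; the menu entering turn $2$ is therefore common; and so on. Consequently the realized allocations coincide almost surely, so the expected CCE welfare equals the expected welfare under truthful play, which is $\Omega(SW_{OPT}(\mathbf u)/\sqrt n)$ by the Filos-Ratsikas et al.\ guarantee invoked in the proof of the pure PoA theorem.

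The Bayes-Nash claim follows from the same argument after re-reading every expectation as being also over independently drawn types $\mathbf u \sim \Delta_{\mathbf u}$: ex-post dominance still holds for each realized type of agent $i$, the BNE condition supplies the reverse inequality (with the same constant deviation, type by type), and the same almost-sure coupling with truthful play gives $\EE_{\mathbf u, \mathbf s}[u_i(RP_i(\mathbf s))] = \EE_{\mathbf u}[u_i(RP_i(\mathbf u))]$; summing over $i$ and invoking the truthful welfare bound pointwise in $\mathbf u$ before integrating yields Bayesian PoA $O(\sqrt n)$.

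The main technical obstacle is ties in the true valuations: equal values of two items can make several distinct items qualify as a true favorite, so the almost-sure equality of utilities does not pin down the actual pick, and the coupling can lose track of the menus as agents indifferently pick different items under $\mathbf s$ versus truthful play. I would handle this by the standard trick of perturbing each $u_{ij}$ by a generic infinitesimal $\varepsilon_{ij}$ to break all ties, observing that the original equilibrium remains an $O(\varepsilon n)$-approximate equilibrium of the perturbed game (enough to run the coupling argument up to an $O(\varepsilon n)$ welfare slack), and then letting $\varepsilon \to 0$.
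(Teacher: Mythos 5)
Your proposal takes essentially the same route as the paper, which only sketches this extension: the paper's stated justification is exactly that, at equilibrium, in every realization any pure strategy in the support must have the agent pick a most-preferred item from her menu at her turn, so the allocation distribution coincides with that of truthful play and the $\Omega(1/\sqrt{n})$ truthful-welfare guarantee of \cite{FFZ:14} applies verbatim; your derivation of this fact via ex-post dominance of truth-telling plus the CCE (resp.\ type-by-type BNE) condition -- a weak pointwise inequality that integrates to an equality, hence holds with equality almost surely -- together with the turn-by-turn coupling is the right formalization of that sketch, and it is sound for pairwise distinct true values.

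The one step I would not accept as written is your tie-handling. Perturbing each $u_{ij}$ by at most $\varepsilon$ makes $\mathbf{q}$ only an $O(\varepsilon)$-approximate equilibrium of the perturbed game, while the value gaps the perturbation creates among originally tied items are of the same or smaller order; the resulting Markov-type bound on the probability that an agent's pick is not her unique perturbed favorite is therefore vacuous, and a single mis-coupled pick changes all downstream menus, so the welfare loss is not controlled by an $O(\varepsilon n)$ additive slack. A cleaner repair, entirely within your own argument, is to drop uniqueness: the exact almost-sure equality you already established shows that each agent's realized pick is \emph{some} item of maximum true value in her menu, so almost every realization is a run of serial dictatorship in which every agent picks truthfully under some tie-breaking, and the guarantee of \cite{FFZ:14} only uses that each agent takes a true-value-maximal available item, hence is insensitive to tie-breaking. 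This avoids perturbation altogether; note the paper itself dispatches ties with the same informal perturbation remark in its pure-Nash discussion and defers all details of the present theorem to the full version.
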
 
Next, we turn to Probabilistic Serial and prove the Price of Anarchy guarantees, with respect to coarse correlated equilibria and Bayes-Nash equilibria. Before we state our theorems however, we will briefly discuss the connection of those extensions with the \emph{smoothness} framework of Roughgarden \cite{roughgarden2009intrinsic}. According to the definition in \cite{roughgarden2009intrinsic}, a game is $(\lambda,\mu)$-\emph{smooth} if it satisfies the following condition
\begin{equation} \label{smoothness}
\sum_{i=1}^{n} u_i (s_i^*,\mathbf{s_{-i}}) \geq \lambda SW(\mathbf{s^*}) - \mu SW(\mathbf{s}), 
\end{equation}
where $\mathbf{s^*}$ is a pure strategy profile that corresponds to the optimal allocation and $\mathbf{s}$ is any pure strategy profile. It is not hard to see that a $(\lambda,\mu)$-smooth game has a Price of Anarchy bounded by $(\mu+1)/\lambda$. 

Since establishing that a game is smooth also implies a pure Price of Anarchy bound, an alternative way of attempting to prove Theorem \ref{thm:PSPOA} would be to try to show smoothness of the game induced by PS, for $ \mu/\lambda=\sqrt{n}$. However, this seems to be a harder task than what we actually do, since in such a proof, one would have to argue about the utilities of agents and possibly reason about the relative preferences for other items, other than the item they are matched with in the optimal allocation. Our approach only needs to consider those items, and hence it seems to be simpler.

An added benefit to the smoothness framework is the existence of the \emph{extension theorem} in \cite{roughgarden2009intrinsic}, which states that for a $(\lambda,\mu)$-smooth game, the Price of Anarchy guarantee extends to broader solution concepts verbatim, without any extra work. At first glance, one might think that proving smoothness for the game induced by PS might be worth the extra effort, since we would get the extensions ``for free''. A closer look at our proofs however shows that our approach is very similar to the proof of the extension theorem but using an alternative, simpler condition.  

Specifically, the analysis in \cite{roughgarden2009intrinsic} uses Inequality \ref{smoothness} as a building block and substitutes the inequality into the expectations that naturally appear when considering randomized strategies. This can be done because the condition applies to all strategy profiles $\mathbf{s}$, when $\mathbf{s^*}$ is an optimal strategy profile. This is exactly what we do as well, but we use the inequality $t_j(s_i^*,\mathbf{s_{-i}})  \geq \frac{1}{4} \cdot t_j(\mathbf{s})$ instead, which is simpler but sufficient since it only applies to the game at hand. If $OPT_i(\mathbf{u})=j$, which is what we use in the proof of Theorem \ref{thm:PSPOA}, then $(s_i^*,\mathbf{s_{-i}})$ can be thought of as a profile where an agent deviates to her strategy in the optimal profile and hence the left-hand side of the inequality is analogous to  the left-hand side of Inequality \ref{smoothness}. In a sense, the inequality $t_j(s_i^*,\mathbf{s_{-i}})  \geq \frac{1}{4} \cdot t_j(\mathbf{s})$, can be viewed as a ``smoothness equivalent'' for the game induced by PS, which then allows us to extend the results to broader solution concepts.       

First, we extend Theorem \ref{thm:PSPOA} to the case where the solution concept is the coarse correlated equilibrium. 

\begin{theorem}\label{PSPOAcoarse}
	The coarse correlated Price of Anarchy of Probabilistic Serial is $O(\sqrt{n})$.
\end{theorem}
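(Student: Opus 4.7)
The plan is to mirror the structure of the proof of Theorem \ref{thm:PSPOA}, substituting the ``smoothness-equivalent'' inequality from Lemma \ref{lem:time} pointwise into expectations over the coarse correlated equilibrium distribution. The coarse correlated equilibrium condition allows me to lower-bound each agent's expected utility by her expected utility when she deviates to a fixed pure strategy $s_i^*$ against the marginal $\mathbf{q_{-i}}$, exactly as in the pure case but now ``inside the expectation''.

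First I would fix a CCE $\mathbf{q}$ and set $j_i = OPT_i(\mathbf{u})$ for each agent $i$, taking $s_i^*$ to be any pure strategy that ranks $j_i$ first. Chaining the CCE inequality with the pointwise bound $u_i(PS_i(\mathbf{s}')) \geq u_{i,j_i}\cdot t_{j_i}(\mathbf{s}')$ (agent $i$ consumes item $j_i$ as long as it is available once she ranks it first) and Lemma \ref{lem:time} applied in each realization, I obtain
$$\EE_{\mathbf{s}\sim\mathbf{q}}\bigl[u_i(PS_i(\mathbf{s}))\bigr] \;\geq\; \tfrac{1}{4}\,u_{i,j_i}\,\EE_{\mathbf{s}\sim\mathbf{q}}\bigl[t_{j_i}(\mathbf{s})\bigr].$$
Summing over $i$ yields $SW_{PS}(\mathbf{u},\mathbf{q}) \geq \tfrac{1}{4}\sum_j w_{i_j}\,\EE[t_j(\mathbf{s})]$, with $w_{i_j}$ denoting the OPT value for item $j$. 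In parallel, I would port the ``safety'' step by using truthful reporting as the fixed CCE deviation: since this guarantees an allocation dominating $(1/n,\ldots,1/n)$ in every realization, it gives $SW_{PS}(\mathbf{u},\mathbf{q})\geq 1$, which settles the sub-case $SW_{OPT}(\mathbf{u})\leq\sqrt{n}$ immediately.

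The main obstacle is translating the per-realization neutrality argument from the pure proof (``assume $t_1\leq\cdots\leq t_n$'') into a single global statement about expected completion times, because in a CCE the relative order of the $t_j$'s may differ across realizations. My plan is to relabel items once and for all so that $\EE[t_1]\leq\cdots\leq\EE[t_n]$ and to prove that nonetheless $\EE[t_j] \geq j/(2n)$. The key observation is that in every realization the $k$-th smallest completion time is at least $k/n$ (since at most $tn$ units of mass can be consumed by time $t$), so for any fixed set of $j$ items the sum of their completion times is at least the sum of the $j$ smallest, namely $j(j+1)/(2n)$. Applying this to $\{1,\ldots,j\}$ in every realization and taking expectations gives $\sum_{k=1}^j \EE[t_k]\geq j(j+1)/(2n)$; since $\EE[t_j]$ is the largest term on the left, $\EE[t_j]\geq (j+1)/(2n)\geq j/(2n)$.

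Plugging back yields $SW_{PS}(\mathbf{u},\mathbf{q})\geq \tfrac{1}{8n}\sum_j j\cdot w_{i_j}$, and the same rearrangement/optimization argument used at the end of the proof of Theorem \ref{thm:PSPOA} then bounds $SW_{OPT}/SW_{PS}$ by $O(n/k)$, where $k$ is (up to rounding) the effective number of non-zero $w_{i_j}$ in the worst case. Combined with $k\geq SW_{OPT}>\sqrt{n}$ in the remaining sub-case, this gives the claimed $O(\sqrt{n})$ bound. Every step apart from the lower bound on $\EE[t_j]$ is a direct port of the pure-Nash analysis, so the bulk of the work is concentrated in that one new ingredient.
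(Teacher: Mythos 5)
Your proposal is correct and follows essentially the same route as the paper: fix $j=OPT_i(\mathbf{u})$, chain the coarse correlated equilibrium condition with the pointwise bound $u_i(PS_i(s_i',\mathbf{s_{-i}}))\geq u_{ij}\,t_j(s_i',\mathbf{s_{-i}})$ and Lemma~\ref{lem:time} applied in each realization, obtaining $\EE_{\mathbf{s}\sim\mathbf{q}}[u_i(PS_i(\mathbf{s}))]\geq \frac{1}{4}u_{ij}\EE_{\mathbf{s}\sim\mathbf{q}}[t_j(\mathbf{s})]$, and then rerun the analysis of Theorem~\ref{thm:PSPOA}. Your additional step showing $\EE[t_j]\geq j/(2n)$ after sorting items by expected completion time correctly fills in the one detail (the per-realization ordering of the $t_j$'s) that the paper glosses over with ``very similar arguments,'' and it only costs a constant factor.
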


\begin{proof}
	Let $\mathbf{u}$ be any valuation profile and let $i$ be any agent. Furthermore, let $j = OPT_i(\mathbf{u})$ and let $s'_i$ be the pure strategy that places item $j$ on top of agent $i$'s preference list. By Lemma \ref{lem:time}, the inequality $t_j(s_i',\mathbf{s_{-i}}) \geq \frac{1}{4} t_j(\mathbf{s})$ holds for every strategy profile $\mathbf{s}$. In particular, it holds for any pure strategy profile $\mathbf{s}$ where $s_i$ is in the support of the distribution of the mixed strategy $q_i$ of agent $i$, for any coarse correlated equilibrium $\mathbf{q}$.This implies that
	\begin{eqnarray*}\label{mixed-truth}
		\EE_{\mathbf{s}\sim \mathbf{q}}[u_i(PS_i(\mathbf{s}))]&\ge& \EE_{\mathbf{s}\sim \mathbf{q}}[u_i(PS_i(s'_i,\mathbf{s_{-i}}))]
		\\&\ge&\EE_{\mathbf{s}\sim \mathbf{q}}[u_{ij}t_j(s'_i,\mathbf{s_{-i}}))]\ge  \frac{1}{4}u_{ij} t_j(\mathbf{s}).
	\end{eqnarray*}
	where the last inequality holds by Lemma \ref{lem:time}. Using this, we can use very similar arguments to the arguments of the proof of Theorem \ref{thm:PSPOA} and obtain the bound.
\end{proof}
For the incomplete information setting, when valuations are drawn from some publically known distributions, we can prove the same upper bound on the Bayesian Price of Anarchy of the mechanism.

\begin{theorem}\label{PS:POABayesian}
	The Bayesian Price of Anarchy of Probabilistic Serial is $O(\sqrt{n})$.
\end{theorem}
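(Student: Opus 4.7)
The plan is to extend Theorem~\ref{PSPOAcoarse} to the Bayesian setting using the standard independent-sample deviation from the Syrgkanis--Tardos smoothness framework. For each agent $i$ and each realized value $u_i$, we consider the randomized deviation $s'_i$ that draws an independent fresh sample $\tilde{\mathbf{u}}_{-i} \sim (\Delta_u)_{-i}$, computes the hypothetical optimal item $\tilde{j}_i := OPT_i(u_i, \tilde{\mathbf{u}}_{-i})$, and reports the pure strategy placing $\tilde{j}_i$ at the top of the preference list. Since $s'_i$ depends only on $u_i$ and exogenous randomness independent of the true profile, it is a valid deviation in the Bayesian game.

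Applying the Bayes-Nash condition with this deviation and invoking Lemma~\ref{lem:time} pointwise for every realization in the support yields
\[
\EE_{\mathbf{u}_{-i},\,\mathbf{s}\sim\mathbf{q}(\mathbf{u})}[u_i(PS_i(\mathbf{s}))] \;\ge\; \EE\big[u_{i\tilde{j}_i}\cdot t_{\tilde{j}_i}(s'_i,\mathbf{s}_{-i})\big] \;\ge\; \tfrac{1}{4}\,\EE\big[u_{i\tilde{j}_i}\cdot t_{\tilde{j}_i}(\mathbf{s})\big].
\]
Taking the further expectation over $u_i$ and summing over $i$, the key move is to exploit the fact that $\mathbf{u}_{-i}$ and $\tilde{\mathbf{u}}_{-i}$ are iid draws from the prior and swap them in the joint expectation: this converts $\tilde{j}_i$ into the true $OPT_i(\mathbf{u})$, at the price of replacing the equilibrium profile $\mathbf{s}$ with an independently coupled copy $\tilde{\mathbf{s}}$ whose conditional distribution given $u_i$ coincides with that of $\mathbf{s}$. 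After the swap we obtain
\[
\EE[SW_{PS}] \;\ge\; \tfrac{1}{4}\,\EE_{\mathbf{u}}\!\Big[\sum_{i} u_{i,\,OPT_i(\mathbf{u})}\cdot t_{OPT_i(\mathbf{u})}(\tilde{\mathbf{s}})\Big],
\]
and since $\{OPT_i(\mathbf{u})\}_i$ is a permutation of items for each fixed $\mathbf{u}$, the bracketed sum has exactly the combinatorial form analyzed in Theorem~\ref{thm:PSPOA}. Combining with the truth-telling baseline $\EE[SW_{PS}]\ge 1$ (the $1/n$ lower bound on each agent's expected utility survives expectations) and applying the pointwise inequality $t_j(\tilde{\mathbf{s}})\ge j/n$ (after reindexing items by completion time under $\tilde{\mathbf{s}}$, which is legitimate by neutrality of $PS$) realization-by-realization inside the expectation, the rearrangement-based adversarial analysis of Theorem~\ref{thm:PSPOA} produces the $O(\sqrt n)$ ratio.

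The main obstacle is the careful bookkeeping of the two independent samples and the justification of the swap; this is precisely the content of the Bayesian extension theorem in the smoothness framework, with the inequality $t_j(s'_i,\mathbf{s}_{-i}) \ge \tfrac{1}{4}\, t_j(\mathbf{s})$ of Lemma~\ref{lem:time} playing the role of the smoothness condition, as emphasized in the discussion preceding the theorem. A minor secondary subtlety, already handled in the coarse correlated proof, is that the combinatorial step must be applied to the realized completion-time vector $t_\cdot(\tilde{\mathbf{s}})$ rather than to its prior expectation, because the sorting of items by completion time is realization-dependent.
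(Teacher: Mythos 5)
Your proposal diverges from the paper at the very first step, and the divergence creates a real gap. The paper applies the equilibrium condition with the deviation $s_i'$ that top-ranks $j_u=OPT_i(\mathbf{u})$ for each sampled profile $\mathbf{u}$ (it explicitly notes that this is a different strategy for every sampled $\mathbf{u}$), and therefore keeps the true optimal item and the true equilibrium completion times $t_{j_u}(\mathbf{s})$, $\mathbf{s}\sim\mathbf{q(u)}$, together inside one expectation. That coupling is what drives the bound: for each realization $(\mathbf{u},\mathbf{s})$ the items $\{OPT_i(\mathbf{u})\}_i$ form a permutation and the completion times of the \emph{single} profile $\mathbf{s}$, sorted, satisfy $t\geq j/n$ for the $j$th item to finish, so $\sum_i u_{ij_u}t_{j_u}(\mathbf{s})\geq SW_{OPT}(\mathbf{u})(SW_{OPT}(\mathbf{u})-1)/(2n)$ realization by realization; the proof then concludes with $\EE[X^2]\geq(\EE[X])^2$ and the baseline $\EE[SW_{PS}(\mathbf{u},\mathbf{s})]\geq 1$. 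Your motivation for the sampled-opponent deviation (a Bayesian deviation may depend only on $u_i$) is a legitimate concern about this step, but your resolution does not complete the argument.

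The gap is the claim that, after the swap, the expression ``has exactly the combinatorial form analyzed in Theorem~\ref{thm:PSPOA}.'' The swap is performed separately for each agent $i$, so agent $i$'s term involves its own phantom profile $\tilde{\mathbf{s}}^{(i)}\sim\mathbf{q}(u_i,\tilde{\mathbf{u}}_{-i})$; writing a single $\tilde{\mathbf{s}}$ inside $\sum_i u_{i,OPT_i(\mathbf{u})}\,t_{OPT_i(\mathbf{u})}(\tilde{\mathbf{s}})$ hides this. With a different profile per term there is no common completion-time ordering to which the reindexing ``the $j$th item to finish has $t\geq j/n$'' can be applied: the permutation property of $\{OPT_i(\mathbf{u})\}_i$ buys nothing when each term is measured with its own clock, and since after the swap $OPT_i(\mathbf{u})$ depends on $\mathbf{u}_{-i}$, which is independent of $\tilde{\mathbf{s}}^{(i)}$ given $u_i$, it can perfectly well be the earliest-finishing item under $\tilde{\mathbf{s}}^{(i)}$ for every $i$ simultaneously. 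The only bound that then survives is the trivial $t_j\geq 1/n$, which yields $\EE[SW_{PS}]\geq\EE[SW_{OPT}]/(4n)$, i.e.\ an $O(n)$ ratio rather than $O(\sqrt{n})$. This is also why the generic Bayesian extension theorem of the smoothness framework does not transfer verbatim: the PS substitute $t_j(s_i^*,\mathbf{s_{-i}})\geq\frac{1}{4}t_j(\mathbf{s})$ has the equilibrium profile on its ``good'' side, unlike $\lambda SW_{OPT}(\mathbf{u})-\mu SW(\mathbf{s})$ whose good term depends on $\mathbf{u}$ alone, so decoupling $OPT_i(\mathbf{u})$ from $\mathbf{s}$ destroys exactly the rank structure the proof needs. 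To repair your argument you would either have to follow the paper and let the deviation depend on the realized $\mathbf{u}$, or supply a new lower bound of order $SW_{OPT}(\mathbf{u})^2/n$ on $\sum_i u_{i,OPT_i(\mathbf{u})}\EE\bigl[t_{OPT_i(\mathbf{u})}(\tilde{\mathbf{s}}^{(i)})\bigr]$; neither appears in your writeup.
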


\begin{proof}
	The proof is again similar to the proof of Theorem \ref{thm:PSPOA}. Let $\mathbf{u}$ be a valuation profile drawn from some distribution satisfying the unit-sum constraint. Let $i$ be any agent and let $j_u = OPT_i(\mathbf{u})$, $i\in [n]$. Note that by a similar argument as the one used in the proof of Theorem \ref{thm:PSPOA}, the expected social welfare of $PS$ is at least $1$ and hence we can assume that $\EE_{\mathbf{u}}[SW_{OPT}(\mathbf{u})] \geq 2\sqrt{2n}+1$. 
	Observe that in any Bayes-Nash equilibrium $\mathbf{q(u)}$ it holds that
	\begin{eqnarray*}\label{equ:bayesian-1}
	\EE_{\substack{\mathbf{u}\\\mathbf{s}\sim \mathbf{q(u)}} }\left[u_i(\mathbf{s})\right]
	&=&\EE_{u_i}\left[\EE_{\substack{\mathbf{u_{-i}}\\\mathbf{s}\sim \mathbf{q(u)}} }\left[u_i(\mathbf{s})\right]\right]\\&\ge&
	\EE_{u_i}\left[\EE_{\substack{\mathbf{u_{-i}}\\\mathbf{s_{-i}}\sim \mathbf{q_{-i}(u_{-i})}} }\left[u_i(s'_i,\mathbf{s_{-i}})\right]\right]\\ &\ge& \EE_{u_i}\left[\EE_{\substack{\mathbf{u_{-i}}\\\mathbf{s_{-i}}\sim \mathbf{q_{-i}(u_{-i})}}}\left[u_{ij_u}t_{j_u}(s'_i,\mathbf{s_{-i}})\right] \right]\\&\ge& \EE_{u_i}\left[\EE_{\substack{\mathbf{u_{-i}}\\\mathbf{s}\sim \mathbf{q(u)}}}\left[\frac{1}{4}u_{ij_u}t_{j_u}(\mathbf{s})\right] \right]\\
	&=&\frac{1}{4} \EE_{\substack{\mathbf{u}\\\mathbf{s}\sim \mathbf{q(u)}}}\left[u_{ij_u}t_{j_u}(\mathbf{s})\right]  
	\end{eqnarray*}
	where the last inequality holds by Lemma \ref{lem:time} since $s_i'$ denotes the strategy that puts item $j_u$ on top of agent $i$'s preference list. Note that this can be a different strategy for every different $\mathbf{u}$ that we sample. For notational convenience, we use $s_i'$ to denote every such strategy. The expected social welfare at the Bayes-Nash equilibrium is then at least
	\begin{small}
	\begin{eqnarray*}\label{equ:bayesian-2}
	\sum_{i=1}^n\EE_{\mathbf{u},\mathbf{s}\sim \mathbf{q(u)}}\left[u_i(\mathbf{s})\right]
	&\ge &\frac{1}{4}\sum_{i\in [n]}\EE_{\substack{\mathbf{u}\\\mathbf{s}\sim \mathbf{q(u)}}}\left[u_{ij_u}t_{j_u}(\mathbf{s})\right]
	\\&\ge& \EE_{\substack{\mathbf{u}\\\mathbf{s}\sim \mathbf{q(u)}}}\left[\sum_{i\in [n]} \frac{i}{4n}u_{ij_u} \right]\\
	&\ge& \EE_{\substack{\mathbf{u}\\\mathbf{s}\sim \mathbf{q(u)}}}\left[\frac{SW_{OPT}(\mathbf{u})(SW_{OPT}(\mathbf{u})-1)}{8n}\right]\\
    &=&\EE_{\mathbf{u}}\left[\frac{SW_{OPT}(\mathbf{u})(SW_{OPT}(\mathbf{u})-1)}{8n}\right]\\
	&\ge& \frac{\EE_{\mathbf{u}}\left[\left(SW_{OPT}(\mathbf{u})\right)^2\right]-\EE_{\mathbf{u}}\left[SW_{OPT}(\mathbf{u})\right]}{8n}\\
	&\ge& \frac{\EE_{\mathbf{u}}[SW_{OPT}(\mathbf{u})] }{2\sqrt{2n}},
	\end{eqnarray*}
\end{small}
	
	and the bound follows.
\end{proof}	\bigskip\bigskip

%
%
%
%
%
%

\section{Extensions}\label{sec:POS}

\subsection{Price of Stability}
Theorem \ref{thm:loweranyunitsum} bounds the Price of Anarchy of all mechanisms.
A more optimistic (and hence stronger when proving lower bounds) measure of efficiency is the \emph{Price of Stability},
i.e. the worst-case ratio over all valuation profiles of the optimal social
welfare over the welfare attained at the \emph{best} equilibrium.
We extend Theorem \ref{thm:loweranyunitsum}  to the Price of Stability of all mechanisms that satisfy a ``proportionality'' property.

Let $a_1 \succ_i a_2 \succ_i \cdots \succ_i a_n$ be the (possibly weak) preference ordering of agent $i$. A random assignment vector $p_i$ for agent $i$ \emph{stochastically dominates} another random assignment vector $q_i$ if $\sum_{j=1}^k p_{ia_j} \geq \sum_{j=1}^k q_{ia_j}$, for all $k=1,2,\cdots, n$.
	The notation that we will use for this relation is $p_i \succ_i^{sd} q_i$.

\begin{definition}[Safe strategy]
	Let $M$ be a mechanism. A strategy $s_i$ is a \emph{safe strategy} if for any strategy profile $s_{-i}$ of the other players, it holds that $M_i(s_i,s_{-i}) \succ_{i}^{sd} \left(\frac{1}{n}, \frac{1}{n}, \ldots, \frac{1}{n} \right)$.
\end{definition}
We will say that a mechanism $M$ has a safe strategy if every agent $i$ has a safe strategy $s_i$ in $M$. We now state our lower bound.

\begin{theorem}\label{thm-PoS-lower}
	The pure Price of Stability of any mechanism that has a safe strategy is $\Omega(\sqrt{n})$.
\end{theorem}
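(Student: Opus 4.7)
The plan is to exhibit a single valuation profile on which every pure Nash equilibrium has social welfare $O(1)$ while the optimum is $\Omega(\sqrt{n})$. Take $n=k^2$ with $k=\sqrt{n}$ and designate the first $\sqrt{n}$ agents as \emph{specials}: special agent $a_j$, for $j=1,\ldots,\sqrt{n}$, has $u_{a_j,j}=1-\epsilon$ and $u_{a_j,\ell}=\epsilon/(n-1)$ for $\ell\neq j$, so she is essentially peaked on item $j$. The remaining $n-\sqrt{n}$ agents are \emph{generics}: each generic $g$ has $u_{g,j}=1/\sqrt{n}$ for every $j$ in the popular set $P=\{1,\ldots,\sqrt{n}\}$ and $u_{g,j}=0$ otherwise. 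Both types are unit-sum. The assignment that matches each $a_j$ to item $j$ and distributes the generics over the non-popular items has social welfare $\sqrt{n}(1-\epsilon)=\Omega(\sqrt{n})$, which is the optimum.

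The core of the argument is a ``crowding out'' step powered by the safe-strategy hypothesis. By definition each agent can secure an allocation that stochastically dominates uniform, and since valuations are unit-sum this guarantees utility at least $(1/n)\sum_j u_{ij}=1/n$ in any PNE. For a generic $g$ this utility equals $(1/\sqrt{n})\sum_{j\in P} p_{gj}$, so the equilibrium satisfies $\sum_{j\in P} p_{gj}\geq 1/\sqrt{n}$. Summing over the $n-\sqrt{n}$ generics, the generics together hold mass at least $\sqrt{n}-1$ on the popular items. Since the total probability mass on $P$ is exactly $\sqrt{n}$, the specials can occupy at most $1$ unit of mass on $P$; in particular $\sum_s p_{s,s}\leq 1$.

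From this the welfare accounting is short. The specials contribute at most $(1-\epsilon)\sum_s p_{s,s}+\epsilon\sqrt{n}/(n-1)=1+o(1)$, and the generics contribute $(1/\sqrt{n})\sum_g\sum_{j\in P} p_{gj}\leq (1/\sqrt{n})\cdot\sqrt{n}=1$. Hence every pure Nash equilibrium, and therefore the best one, has welfare $O(1)$, giving a price of stability of at least $\Omega(\sqrt{n})$.

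The step I expect to need most care is transferring the safe-strategy guarantee (which formally applies to the \emph{safe} allocation) to the equilibrium allocation. This works because the equilibrium utility must exceed the utility of any unilateral deviation, including to the safe strategy, so the $1/n$ utility floor transfers; the somewhat subtler point is that the generics have \emph{weak} preferences (popular items tied at the top, the rest tied at the bottom), but any linear extension of this order places exactly $P$ in positions $1,\ldots,\sqrt{n}$, so the stochastic-dominance condition rewrites directly into the probability bound we use. Everything else is routine bookkeeping.
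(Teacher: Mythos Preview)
Your argument is correct and is essentially the paper's own proof: the same split into $\sqrt{n}$ ``special'' agents peaked on distinct popular items and $n-\sqrt{n}$ ``generic'' agents uniform over the popular set, the same use of the safe-strategy floor to force generics to occupy at least $\sqrt{n}-1$ of the $\sqrt{n}$ units of mass on the popular items, and the same $O(1)$ welfare accounting. The only differences are cosmetic---your $\epsilon$-perturbation of the specials is unnecessary (the paper simply uses $u_{ii}=1$, $u_{ij}=0$), and you spell out the weak-preference subtlety that the paper leaves implicit.
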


\begin{proof}
	Let $M$ be a mechanism and let $I= \{k+1,\ldots,n\}$ be a subset of agents. Let $\mathbf{u}$ be the following valuation profile.
\begin{itemize}
	\item For all agents $i \in I$, let $u_{ij} = \frac{1}{k}$ for $j = 1, \cdots, k$ and $u_{ij}=0$ otherwise.
	\item For all agents $i \notin I$, let $u_{ii}=1$ and $u_{ij}=0, j\neq i$.
\end{itemize}
	
	Now let $\mathbf{s}$ be a pure Nash equilibrium on profile $\mathbf{u}$ and let $s'_i$ be a safe strategy of agent $i$. The expected utility of each agent $i\in I$ in the pure Nash equilibrium $\mathbf s$ is 
	\begin{eqnarray*}
\EE[u_i(\mathbf s)]&=&\sum_{j\in [n]} p_{ij} (s_i, \mathbf{s_{-i}}) v_{ij} \ge \sum_{j\in [n]} p_{ij} (s'_i, \mathbf{s_{-i}}) v_{ij} \\ &\ge& \frac{1}{n} \sum_{j\in [n]} v_{ij} = \frac{1}{n},
	\end{eqnarray*} due to the fact that $\mathbf s$ is pure Nash equilibrium and $s'_i$ is a safe strategy of agent $i$. On the other hand, the utility of agent  $i\in I$ can be calculated by $\EE[u_i(\mathbf s)]=\sum_{j\in [n]} p_{ij} (s_i, s_{-i}) v_{ij} = (\sum_{j=1}^{k} p_{ij})/k$. Because $\mathbf{s}$ is a pure Nash equilibrium, it holds that $\EE[u_i] \ge 1/n$, so we get that $\sum_{j=1}^{k} p_{ij} \ge k/n$  for all $i\in I$.
As for the rest of the agents, \[\sum_{i\in N\backslash I} \sum_{j=1}^{k} p_{ij} = k - \sum_{i\in I} \sum_{j=1}^{k} p_{ij} \le k - (n-k) \frac{k}{n} = \frac{k^2}{n}.\]
This implies that the contribution to the social welfare from agents not in $I$ is at most $k^2/n$ and the expected social welfare of $M$ will be at most $1+(k^2/n)$. It holds that $SW_{OPT}(\mathbf{u}) \ge k$ and the bound follows by letting $k=\sqrt{n}$.
\end{proof}
Due to Theorem \ref{thm-PoS-lower}, in order to obtain an $\Omega(\sqrt{n})$ bound for a mechanism $M$, it suffices to prove that $M$ has a safe strategy. In fact, most reasonable mechanisms, including Random Priority and Probabilistic Serial, as well as all ordinal \emph{envy-free} mechanisms satisfy this property. 


\begin{definition}[Envy-freeness]
A mechanism $M$ is (ex-ante) \emph{envy-free} if for all agents $i$ and $r$ and all profiles $\mathbf{s}$, it holds that $\sum_{j=1}^n p_{ij}s_{ij}\geq \sum_{j=1}^n p_{rj}s_{rj}$. Furthermore, if $M$ is ordinal, then this implies $p_i^{M,\mathbf{s}} \succ_{s_i}^{sd} p_{r}^{M,\mathbf{s}}$.
\end{definition}	
 Given the interpretation of a truth-telling safe strategy as a ``proportionality'' property, the next lemma is not surprising.
 
\begin{lemma}\label{lem:envyfree}
Let $M$ be an ordinal, envy-free mechanism. Then for any agent $i$, the truth-telling strategy $u_i$ is a safe strategy.
\end{lemma}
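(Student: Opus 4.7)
The plan is to unfold the envy-freeness condition and then average it across all agents, exploiting the fact that every item is allocated to exactly one agent. Fix an agent $i$, an arbitrary strategy profile $\mathbf{s_{-i}}$ of the other agents, and set $\mathbf{s} = (u_i, \mathbf{s_{-i}})$. Let $a_1 \succ_i a_2 \succ_i \cdots \succ_i a_n$ be agent $i$'s (true) preference ordering over items. Because $i$ is truth-telling, her reported preferences under $s_i = u_i$ agree with these true preferences, so I can safely work with a single ordering throughout.

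Next I would apply the ordinal envy-freeness of $M$: by the second clause of the envy-freeness definition, $p_i^{M,\mathbf{s}} \succ_{s_i}^{sd} p_r^{M,\mathbf{s}}$ for every agent $r \in N$. Written out, this gives
\[
\sum_{j=1}^{k} p_{i a_j}^{M,\mathbf{s}} \;\geq\; \sum_{j=1}^{k} p_{r a_j}^{M,\mathbf{s}} \qquad \text{for all } k \in \{1,\ldots,n\},\ r \in N.
\]
Summing these $n$ inequalities over $r \in N$ (including the trivial case $r = i$) produces
\[
n \sum_{j=1}^{k} p_{i a_j}^{M,\mathbf{s}} \;\geq\; \sum_{j=1}^{k} \sum_{r=1}^{n} p_{r a_j}^{M,\mathbf{s}}.
\]

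Finally, I would close the argument by observing that $M$ always outputs a matching, so each item is allocated to exactly one agent in every realization; consequently $\sum_{r=1}^{n} p_{r a_j}^{M,\mathbf{s}} = 1$ for every $j$. Plugging this into the previous display yields $\sum_{j=1}^{k} p_{i a_j}^{M,\mathbf{s}} \geq k/n$ for every $k$, which is precisely the statement that $p_i^{M,\mathbf{s}} \succ_i^{sd} (1/n, \ldots, 1/n)$, establishing that $u_i$ is a safe strategy. There is no real obstacle here: the entire proof is a two-line averaging argument, and the only subtlety worth flagging is that truth-telling is what allows the stochastic dominance (which is relative to the \emph{reported} order for ordinal mechanisms) to be translated into a proportionality guarantee for agent $i$'s \emph{true} utility.
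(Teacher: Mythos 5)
Your proof is correct and follows essentially the same route as the paper: invoke the stochastic-dominance form of ordinal envy-freeness at the profile $(u_i,\mathbf{s_{-i}})$, sum the resulting inequalities over all agents $r$, and use the fact that each item is fully allocated so the prefix sums on the right equal $k$, giving $\sum_{j=1}^k p_{ia_j} \geq k/n$. Your explicit remark that truth-telling is what lets the reported-order dominance be read as dominance with respect to agent $i$'s true ordering is a nice (if minor) clarification that the paper leaves implicit.
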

 
\begin{proof}
Let $\mathbf{s} =(u_i,\mathbf{s_{-i}})$ be the strategy profile in which agent $i$ is truth-telling and the rest of the agents are playing some strategies $\mathbf{s_{-i}}$. Since $M$ is envy-free and ordinal, it holds that $\sum_{j=1}^{\ell} p^{\mathbf{s}}_{ij}\ge \sum_{j=1}^{\ell} p^{\mathbf{s}}_{rj}$ for all agents $r \in \{1,\ldots,n\}$ and all $\ell \in \{1,\ldots,n\}$. Summing up these inequalities for agents $r=1,2,\ldots,n$ we obtain
\begin{eqnarray*}
n\sum_{j=1}^\ell p^{\mathbf{s}}_{ij}&\ge& \sum_{j=1}^{\ell} \sum_{r=1}^n p^{\mathbf{s}}_{rj}=\ell,
\end{eqnarray*}
which implies that $\sum_{j=1}^\ell p^{\mathbf{s}}_{ij}\ge \frac{\ell}{n}$, for all $i \in \{1,\ldots,n\}$, and for all $\ell\in \{1,\ldots,n\}$.
\end{proof} 	
Note that since Probabilistic Serial is ordinal and envy-free \cite{BM:01}, by Lemma \ref{lem:envyfree}, it has a safe strategy and hence Theorem \ref{thm-PoS-lower} applies. It is not hard to see that Random Priority has a safe strategy too.

\begin{lemma}
Random Priority has a safe strategy.
\end{lemma}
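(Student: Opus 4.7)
The plan is to show that reporting one's true preference ordering is a safe strategy in Random Priority. Fix agent $i$ with true preference ordering $a_1 \succ_i a_2 \succ_i \cdots \succ_i a_n$ (ties broken arbitrarily), let $s_i$ be this ordering, and let $\mathbf{s_{-i}}$ be an arbitrary strategy profile of the other agents. I need to verify that $M_i(s_i,\mathbf{s_{-i}}) \succ_i^{sd} (1/n,\ldots,1/n)$, i.e., that for every $j \in \{1,\ldots,n\}$, $\sum_{\ell=1}^j p_{i a_\ell} \geq j/n$.

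The key combinatorial observation is straightforward. Random Priority draws a permutation $\pi$ of the agents uniformly at random and then lets agents choose in the order prescribed by $\pi$, each picking the most preferred available item according to their reported list. Let $K$ denote the (random) position of agent $i$ in $\pi$, so $\Pr[K = k] = 1/n$ for each $k \in \{1,\ldots,n\}$. Conditioned on $K = k$, exactly $k - 1$ items have been consumed by the time agent $i$ moves. Hence, for any $j \geq k$, at least $j - (k-1) \geq 1$ of agent $i$'s top $j$ items (in her true order) are still available. Because $s_i$ lists $a_1, a_2, \ldots$ in true preference order, agent $i$ picks her most preferred available item, which must lie within $\{a_1,\ldots,a_j\}$.

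Therefore, conditional on $K \leq j$, the item allocated to agent $i$ is among $\{a_1,\ldots,a_j\}$ with probability one, and so
\begin{equation*}
\sum_{\ell=1}^{j} p_{i a_\ell} \;\geq\; \Pr[K \leq j] \;=\; \frac{j}{n}.
\end{equation*}
Since this holds for every $j$, the allocation vector $M_i(s_i,\mathbf{s_{-i}})$ stochastically dominates the uniform vector, establishing that $s_i$ is a safe strategy for agent $i$. As $i$ was arbitrary, Random Priority admits a safe strategy.

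I do not foresee any real obstacle: the proof rests entirely on the fact that in any realization of the random priority order, an agent at position $k$ still has access to at least one of her top $k$ items, and truth-telling ensures she actually takes one. The only minor care needed is in handling weak preferences (ties in the true valuation), but since truth-telling is defined relative to \emph{some} ordering consistent with the utilities, and safe strategies are defined with respect to the induced weak preference via stochastic dominance, the inequality above is unaffected by the tie-breaking choice.
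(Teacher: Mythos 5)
Your proof is correct and follows essentially the same approach as the paper: condition on the agent's (uniformly random) position in the priority order and observe that truth-telling guarantees her one of her top $j$ items whenever she is drawn among the first $j$, giving $\sum_{\ell=1}^{j} p_{ia_\ell} \geq j/n$. You simply spell out the availability argument (and the tie-breaking point) that the paper leaves implicit.
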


\begin{proof}
Since Random Priority first fixes an ordering of agents uniformly at random, every agent $i$ has a probability of $1/n$ to be selected first to choose an item, a probability of $2/n$ to be selected first or second and so on. If the agent ranks her items truthfully, then for every $\ell=1,\ldots,n$, it holds that $\sum_{i=1}^{\ell} p_{ij} \geq \ell/n$.
\end{proof}	
In a sense, the safe strategy property is essential for the bound to hold; one can show that the \emph{randomly dictatorial} mechanism, that matches a uniformly chosen agent with her most preferred item and the rest of the agents with items based solely on that agent's reports achieves a constant Price of Stability. On the other hand, the Price of Anarchy of the mechanism is $\Omega(n)$. It would be interesting to show whether Price of Anarchy guarantees imply Price of Stability lower bounds in general.

\subsection{Unit-range representation} \label{sec:Unit-range}

Our second extension is concerned with the other normalization that is also
common in the literature~\cite{ZHOU:90,qiang,FFZ:14}, the unit-range representation,
that is, $\max_j u_i(j)=1$ and $\min_j u_i(j)=0$.  First, the Price of Anarchy guarantees from Section \ref{sec:PS} extend directly to the unit-range case. For Random Priority, since the results in \cite{FFZ:14} hold for this normalization as well, we can apply the same techniques to prove the bounds. For Probabilistic Serial, first, observe that Lemma \ref{lem:tv} holds independently of the representation. Secondly, in the proof of Theorem \ref{thm:PSPOA}, it now holds that 
\[ SW_{PS}(\mathbf{u,s}) \geq \frac{1}{n} \sum_{i=1}^n \sum_{j=1}^n u_{ij} \geq 1,\] 
which is sufficient for bounding the Price of Anarchy when $SW_{OPT}(\mathbf{u}) \leq \sqrt{n}$. Finally, the arguments for the case when $SW_{OPT}(\mathbf{u}) \leq \sqrt{n}$ hold for both representations. 

Concerning the lower bounds, we can prove the following theorem on the Price of Anarchy of deterministic mechanisms.

\begin{theorem}\label{thm:deterunitrange}
	The Price of Anarchy of any deterministic mechanism that always has pure Nash equilibria is $\Omega(n)$ for the unit-range representation.
\end{theorem}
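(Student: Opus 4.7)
The plan is to adapt the construction in the proof of Theorem~\ref{thm:deterpoa} to the unit-range normalization. First, I would fix a base profile $\mathbf{u}$ in which every agent has the \emph{same} strict preference ordering $1\succ 2\succ\cdots\succ n$, realized as a unit-range vector, for instance $u_{ij}=(n-j)/(n-1)$. By hypothesis $M$ admits a pure Nash equilibrium $\mathbf{s}$ on $\mathbf{u}$, and after relabeling the agents we may assume $M_i(\mathbf{s})=i$ for every $i$.

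Next I would construct a perturbed profile $\mathbf{u}'$: agent $1$'s valuation is kept unchanged, $u_1'=u_1$; for each $i\ge 2$ set $u_{i,i-1}'=1$, set $u_{ij}'=0$ for every $j\ge i$, and assign items $1,\ldots,i-2$ small, strictly decreasing positive values $\epsilon_{ij}$ (so item $1$ is most preferred among these, then item $2$, and so on). Each $u_i'$ is unit-range ($\max=1$, $\min=0$). The crucial observation is that the set of items that agent $i$ \emph{strictly} prefers to her assigned item $i$ is $\{1,\ldots,i-1\}$ in both $\mathbf{u}$ and $\mathbf{u}'$: in $\mathbf{u}$ because the ordering is $1\succ 2\succ\cdots\succ n$, and in $\mathbf{u}'$ because items $1,\ldots,i-2$ get small positive value and item $i-1$ gets value $1$, whereas items $i,i+1,\ldots,n$ are all tied at $0$ with item $i$.

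This matching of strict upper contour sets yields the equilibrium-preservation step exactly as in Theorem~\ref{thm:deterpoa}: any deviation of agent $i$ that would be strictly profitable under $\mathbf{u}'$ must deliver an item in $\{1,\ldots,i-1\}$, which would then also be strictly profitable under $\mathbf{u}$, contradicting the pure Nash equilibrium hypothesis for $\mathbf{s}$. Hence $\mathbf{s}$ is also a pure Nash equilibrium under $\mathbf{u}'$.

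Finally, I would evaluate welfare. At $\mathbf{s}$ under $\mathbf{u}'$, agent $1$ contributes $u_{1,1}'=u_{1,1}=1$ and every other agent $i\ge 2$ contributes $u_{i,i}'=0$, so $SW_M(\mathbf{u}',\mathbf{s})=1$. On the other hand, the matching that assigns item $i-1$ to agent $i$ for every $i\ge 2$ (and item $n$ to agent $1$) has welfare at least $n-1$, so $SW_{OPT}(\mathbf{u}')\ge n-1$ and the ratio is $\Omega(n)$. The only mildly delicate point is the choice of the positive values $\epsilon_{ij}$ so that item $i-1$ is the unique maximum (keeping $u_i'$ unit-range) while items $1,\ldots,i-2$ are ordered as in $\mathbf{u}$; this is completely routine (e.g.\ take $\epsilon_{ij}=1/n^{n+j}$).
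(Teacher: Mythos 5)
Your proposal is correct and takes essentially the approach the paper intends: the paper omits this proof but states it is a straightforward adaptation of the proof of Theorem~\ref{thm:deterpoa}, and your unit-range profile $\mathbf{u}'$ (value $1$ on item $i-1$, value $0$ on all of $\{i,\ldots,n\}$, tiny positives on $\{1,\ldots,i-2\}$) preserves the strict upper contour set $\{1,\ldots,i-1\}$ of each agent's equilibrium item exactly as in that argument, yielding equilibrium welfare $1$ against an optimum of at least $n-1$. I see no gaps.
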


Again, similarly to the corresponding bound in Section \ref{sec:Lower}, the mechanism that naively maximizes the sum of the reported valuations has pure Nash equilibria and achieves the above bound. The proof of Theorem \ref{thm:deterunitrange} is quite similar to that of Theorem \ref{thm:deterpoa} and is omitted due to lack of space.

More importantly, it is not clear whether the general lower bound on the
Price of Anarchy of all mechanisms that we proved in
Theorem~\ref{thm:loweranyunitsum} extends to the unit-range representation as well.
In fact, we do not know of any bound for the unit-range case and proving
one seems to be a quite complicated task.
As a first step in that direction, the following theorem obtains a lower
bound for $\epsilon$-{\em approximate} (pure) Nash equilibria.
A strategy profile is an $\epsilon$-approximate pure Nash equilibrium
if no agent can deviate to another strategy and improve her utility by
more than $\epsilon$. 
While the following result applies for any positive $\epsilon$, it is
weaker than a corresponding result for exact equilibria.

\begin{theorem}
Let $M$ be a mechanism and let $\epsilon\in(0,1)$.
The $\epsilon-$approximate Price of Anarchy of $M$ is $\Omega(n^{1/4})$
for the unit-range representation.
\end{theorem}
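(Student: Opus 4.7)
The plan is to adapt the proof of Theorem~\ref{thm:loweranyunitsum} to the unit-range setting, using the $\epsilon$-approximation to absorb the extra slack that the unit-range constraint inherently forces. Given any mechanism $M$ (assumed to admit an $\epsilon$-approximate pure Nash equilibrium, else the statement is vacuous), I will construct a unit-range valuation profile $\mathbf{u}$ on which $M$ has such an equilibrium $\mathbf{s}$, then perturb a carefully chosen set of ``low-probability'' agents to obtain $\mathbf{u}'$ such that (i) $\mathbf{s}$ remains an $\epsilon$-approximate equilibrium on $\mathbf{u}'$, and (ii) $SW_{\mathrm{OPT}}(\mathbf{u}')/SW_M(\mathbf{u}',\mathbf{s}) = \Omega(n^{1/4})$.

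For the construction, let $k = \Theta(n^{1/4})$ and partition the agents into $k$ groups $G_1,\ldots,G_k$ of size $\Theta(n^{3/4})$ each, with $k$ designated ``special items''. Each agent $i \in G_j$ is assigned a unit-range valuation with $u_{ij}=1$, the other special items valued at $1-\gamma$ for a small parameter $\gamma \in (0,1)$ to be tuned as a function of $\epsilon$, and non-special items valued at $0$. By an averaging argument analogous to the one in Theorem~\ref{thm:loweranyunitsum}, any $\epsilon$-approximate pure Nash equilibrium $\mathbf{s}$ must contain, in each group $G_j$, an agent $i_j^*$ with $p_{i_j^* j}^{M,\mathbf{s}} \leq 1/|G_j| = O(n^{-3/4})$. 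The perturbed profile $\mathbf{u}'$ is obtained by changing only the valuations of $i_1^*,\ldots,i_k^*$ to the extreme unit-range vector with value $1$ on the target item and $0$ everywhere else.

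The critical step is to verify that $\mathbf{s}$ remains an $\epsilon$-approximate Nash equilibrium on $\mathbf{u}'$: for any deviation of a modified agent $i_j^*$, the utility change under $\mathbf{u}'$ differs from that under $\mathbf{u}$ by at most $O(\gamma)$ times the variation in probability mass on other special items (a bounded quantity). Tuning $\gamma = \Theta(\epsilon)$ ensures this discrepancy is absorbed by the $\epsilon$ slack, so the approximate equilibrium condition carries over from $\mathbf{u}$ to $\mathbf{u}'$. The welfare analysis then lower-bounds $SW_{\mathrm{OPT}}(\mathbf{u}') = \Omega(k) = \Omega(n^{1/4})$ (by matching each $i_j^*$ with item $j$), while upper-bounding the modified agents' contribution under $M$ by $\sum_j p_{i_j^* j}^{M,\mathbf{s}} = O(n^{-1/2})$. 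Unmodified agents must be handled simultaneously; the construction is designed so that non-special items contribute $0$ in the optimum, and so that the mass unmodified agents receive on special items is comparable in the numerator and the denominator, letting the ratio be dominated by the contribution of the modified agents.

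The main obstacle is the trade-off between $\gamma$ and $k$. In the unit-sum proof of Theorem~\ref{thm:loweranyunitsum}, the perturbation $\alpha$ can be made arbitrarily small and does not disturb Nash; in unit-range, the gap $\gamma$ must be nonzero, and Nash preservation forces $\gamma$ to depend on $\epsilon$. Combined with the fact that in unit-range every agent can contribute up to $1$ to the mechanism's welfare, preventing unmodified agents from swamping the denominator is harder than in the unit-sum case. This tension---between a large enough $\gamma$ for the equilibrium to survive and a tight enough ratio for $\Omega(n^{1/4})$ to persist---is exactly what degrades the bound from $\Omega(\sqrt{n})$ (unit-sum) to $\Omega(n^{1/4})$ (unit-range) and explains why the statement is confined to $\epsilon$-approximate equilibria and remains weaker than its exact/unit-sum counterpart.
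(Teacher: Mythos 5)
Your construction has a genuine gap in the welfare accounting, and it is fatal to the whole plan. With $k=\Theta(n^{1/4})$ special items, \emph{every} agent in $G_j$ values item $j$ at $1$ and the other special items at $1-\gamma$, because unit-range forces each agent's top value to be $1$. After you modify only the $k$ pigeonhole agents $i_1^*,\ldots,i_k^*$, the equilibrium $\mathbf{s}$ can still assign each special item $j$ to one of the $\Theta(n^{3/4})$ \emph{unmodified} members of $G_j$, who values it at $1$ or $1-\gamma$ --- essentially the same value the modified agents have for it in the optimum. Hence $SW_M(\mathbf{u}',\mathbf{s})$ can be $\Theta(k)$ while $SW_{OPT}(\mathbf{u}')=k$, and the ratio is $O(1)$; concretely, a mechanism that splits item $j$ among the agents ranking it first (or Probabilistic Serial itself) yields constant Price of Anarchy on your profile. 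Your claim that the unmodified agents' mass on special items is ``comparable in the numerator and the denominator'' is exactly what fails: in the optimum that mass is zero (all special items go to the modified agents), while in the equilibrium it can be nearly all of it, at full value. This is the structural obstacle of unit-range --- unlike unit-sum, you cannot make the contested items nearly worthless to the unmodified agents --- and your construction does not circumvent it. The equilibrium-transfer step is also quantitatively wrong as stated: the gain from a deviation under $\mathbf{u}$ differs from the gain under $\mathbf{u}'$ by $(1-\gamma)$ times the shift of probability mass on the other special items (their values drop from $1-\gamma$ to $0$), not by $O(\gamma)$ times it, so tuning $\gamma=\Theta(\epsilon)$ does not let the $\epsilon$ slack absorb the discrepancy. (Valuing the other special items at $\gamma$ instead would fix the transfer but not the welfare problem above.)

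The paper's proof avoids both issues with a different construction and a different perturbation. There is a single item (item $1$) that \emph{all} agents value at $1$, so the top value forced by unit-range contributes at most $1$ to any allocation's welfare; the differentiation happens on $k=\sqrt{n}$ secondary items, valued $\delta^2$ by a small set $I$ of $k$ agents and $\delta^3$ by everyone else. Given an equilibrium, one identifies $k(1-2\delta)$ members of $I$ with utility at most $\delta$ and $k(1-2\delta)$ non-members with probability $O(k/n)$ of getting any of items $1,\ldots,k+1$, and \emph{swaps} the valuations of these two sets rather than making anyone single-minded. The swap leaves the optimum ($1+k\delta^2$) unchanged, and the affected agents remain $\delta$-best-responding precisely because they had low utility or low probability --- this is where the $\epsilon$-approximation is genuinely used. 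The equilibrium welfare after the swap is at most $1+3k\delta^3$, giving a ratio of order $1/\delta$, and the constraint $k\delta^2\gtrsim 1$ (i.e.\ $\delta\gtrsim n^{-1/4}$) is what caps the bound at $\Omega(n^{1/4})$. If you want to salvage your approach, you would need an analogue of these two ingredients: a way to prevent non-designated agents from realizing near-optimal value on the contested items, and a perturbation that provably costs only $O(\epsilon)$ in incentives; the single-minded perturbation imported from Theorem~\ref{thm:loweranyunitsum} provides neither under the unit-range normalization.
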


\begin{proof}
	Assume $n=k^2$, where $k\in \NN$ will be the
	size of a subset $I$ of ``important'' agents.
	We consider valuation profiles where, for some parameter $\delta\in(0,1)$,
	\begin{itemize}
		\item all agents have value 1 for item 1,
		\item there is a subset $I$ of agents with $|I|=k$ for which any agent $i\in I$ has value
		$\delta^2$ for any item $j\in\{2,\ldots,k+1\}$ and $0$ for all other items,
		\item for agent $i\not\in I$, $i$ has value $\delta^3$ for items $j\in\{2,\ldots,k+1\}$ and $0$
		for all other items.
	\end{itemize}
	Let $\mathbf{u}$ be such a valuation profile and let $\mathbf{s}$ be a Nash equilibrium. In the optimal allocation members of $I$ receive items $\{2,\ldots,k+1\}$ and such an allocation has social welfare $k\delta^2+1$.
	
	First, we claim that there are $k(1-2\delta)$ members of $I$ whose payoffs in $\mathbf{s}$ are at most $\delta$; call this set $X$. If that were false, then there would be more than
	$2k\delta$ members of $I$ whose payoffs in $\mathbf{s}$ were more than $\delta$.
	That would imply that the social welfare of $\mathbf{s}$ was more than $2k\delta^2$, which
	would contradict the optimal social welfare attainable, for large enough $n$
	(specifically, $n>1/\delta^4$).
	
	Next, we claim that there are at least $k(1-2\delta)$ non-members of $I$ whose
		probability (in $\mathbf{s}$) to receive any item in $\{1,\ldots,k+1\}$ is at most $4(k+1)/n$; call this set $Y$.
	To see this, observe that there are at least $\frac{3}{4}n$ agents who all have
	probability $\leq 4/n$ to receive item 1.
	Furthermore, there are at least $3n/4$ agents who all have probability
	$\leq 4k/n$ to receive an item from the set $2,\ldots,k+1$.
	Hence there are at least $n/2$ agents whose probabilities to obtain
	these items satisfy both properties.
	
	We now consider the operation of swapping the valuations of the agents in sets $X$ and $Y$ so that the members of $I$ from $X$ become non-members,
	and vice versa. We will argue that given that they were best-responding beforehand,
	they are $\delta$-best-responding afterwards. Consequently $\mathbf{s}$ is an
	$\delta$-NE of the modified set of agents.
	The optimum social welfare is unchanged by this operation since it only involves
	exchanging the payoff functions of pairs of agents.
	We show that the social welfare of $\mathbf{s}$ is some fraction of the optimal social welfare,
	that goes to $0$ as $n$ increases and $\delta$ decreases.
	
	Let $I'$ be the set of agents who, after the swap, have the higher utility
	of $\delta^2$ for getting items from $\{2,\ldots,k+1\}$. That is, $I'$ is the set of
	agents in $Y$, together with $I$ minus the agents in $X$.
	
	Following the above valuation swap, the agents in $X$ are $\delta$-best responding.
	To see this, note that these agents have had a reduction to their utilities for the outcome of receiving
	items from $\{2,\ldots,k+1\}$. This means that a profitable deviation for such
	agents should result in them being more likely to obtain item 1, in return for them
	being less likely to obtain an item from $\{2,\ldots,k+1\}$. However they cannot have
	probability more than $\delta$ to receive item 1, since that would contradict
	the property that their expected payoff was at most $\delta$.
	
	After the swap, the agents in $Y$ are also $\delta$-best responding.
	Again, these agents have had their utilities increased from $\delta^3$ to $\delta^2$
	for the outcome of receiving an item from $\{2,\ldots,k+1\}$.
	Hence any profitable deviation for such an agent would involve a reduction in
	the probability to get item 1 in return for an increased probability to get an
	item from $\{2,\ldots,k+1\}$. However, since the payoff for any item from
	$\{2,\ldots,k+1\}$ is only $\delta^2$, such a deviation pays less than $\delta$.
	
	Finally, observe that the social welfare of $\mathbf{s}$ under the new profile (after the swap)
		is at most $1+3k\delta^3$.
	To see this, note that (by an earlier argument and the definition of $I'$)
	$k(1-2\delta)$ members of $I'$ have probability
	at most $4(k+1)/n$ to receive any item from $\{1,\ldots,k+1\}$.
	To upper bound the expected social welfare, note that item 1 contributes 1 to the social welfare.
	Items in $\{2,\ldots,k+1\}$ contribute in total, $\delta^2$ times the expected number
	of members of $I'$ who get them, plus $\delta^3$ times the expected number
	of non-members of $I'$ who get them, which is at most
	$\delta^2k2\delta+\delta^3k(1-2\delta)$ which is less than $3k\delta^3$.
	
	Overall, the price of anarchy is at least $(k\delta^2+1)/3k\delta^3$, which
	is more than $1/\delta$. The statement of the theorem is obtained by choosing
	$\delta$ to be less than $\epsilon$, $n$ large enough for the arguments to hold
	for the chosen $\delta$, i.e. $n>1/\delta^4$.
\end{proof}

\section{Conclusion and Future Work}

Our results are rather negative: we identify a non-constant lower
bound on the Price of Anarchy for one-sided matching, and find
a matching upper bound achieved by well-known ordinal mechanisms.
However, such negative results are important to understand the
challenge faced by a social-welfare maximizer: for example,
we establish that it is not enough to elicit cardinal valuations,
in order to obtain good social welfare.
It may be that better welfare guarantees should use some assumption
of truth-bias, or some assumption of additional structure in agents'
preferences. 

An interesting direction of research would be to identify conditions 
on the valuation space that allow for constant values of the Price of Anarchy or impose
some distributional assumption on the inputs and quantify the average loss in welfare
due to selfish behavior. For the general, worst-case setting, one question raised is whether one can obtain
Price of Anarchy or Price of Stability bounds that match our upper bounds for the unit-range representation as well.


\clearpage

\bibliographystyle{plain}
\bibliography{refs}

\clearpage


\end{document}